\documentclass[11pt,oneside]{article} 

\usepackage{a4wide}
\usepackage[title]{appendix}

\RequirePackage{amsthm}
\RequirePackage{stmaryrd} 
\usepackage{xcolor}
\usepackage{amsmath,amsfonts,amssymb}
\usepackage{inconsolata} 

\RequirePackage{subcaption}

\makeatletter
\newcommand{\verbatimfont}[1]{\renewcommand{\verbatim@font}{\ttfamily#1}}
\makeatother

\newcommand{\GL}{\mathrm{GL}}

\newcommand{\Complex}{\mathbb{C}}
\newcommand{\Rational}{\mathbb{Q}}

\newcommand{\Integer}{\mathbb{Z}}

\newcommand{\Field}{\mathbb{F}}

\newcommand{\Hilb}{\mathcal{H}}

\newcommand{\tensor}{\otimes}

\usepackage{mathtools}

\newcounter{numitem}

\newcommand{\ket}[1]{|{#1}\rangle}

\newcommand{\bra}[1]{\langle{#1}|}

\newcommand{\Set}{\mathsf{Set}}

\newtheorem{lemma}{Lemma}[section]
\newtheorem{proposition}[lemma]{Proposition}
\newtheorem{corollary}[lemma]{Corollary}
\newtheorem{theorem}[lemma]{Theorem}

\newtheorem{conjecture}[lemma]{Conjecture}

\theoremstyle{remark}
\newtheorem{example}[lemma]{Example}

\newcommand{\igc}[1]{\raisebox{-0.45\height}{\includegraphics{images/#1.pdf}}}

\newcommand{\todo}[1]{}

\newcommand{\CHat}{\widehat{\Complex}}
\newcommand{\CProj}{\mathbb{P}}
\newcommand{\Proj}{\mathcal{P}}
\newcommand{\PBot}{\Proj_\bot}
\newcommand{\CVec}{\Complex\text{FVec}}
\newcommand{\CMan}{\text{CMan}}
\renewcommand{\Set}{\text{Set}}
\newcommand{\PSet}{\text{Set}_\bot}
\newcommand{\PMSet}{\text{Set}_{\bot,\wedge}}

\newcommand{\Pauli}{\text{P}}
\newcommand{\Clifford}{\text{Cliff}}

\newcommand{\Matrix}[1]{\begin{pmatrix}#1\end{pmatrix}}

\title{Meromorphic Quantum Computing}

\author{Simon Burton, Hussain Anwar \\
{\small Quantinuum}
}

\begin{document}
\maketitle


\begin{abstract}
We consider the kinematic axioms of quantum mechanics projectively.
Instead of normalized (pure) states up to global phase, 
states become one-dimensional subspaces of vector spaces.
This process of projectivization is functorial and lax monoidal.
For qubits it identifies the Bloch sphere with the Riemann sphere.
We interpret a fragment of the ZXW-calculus projectively and thereby provide an
alternate derivation of the arithmetic GHZ/W-calculus of Coecke et al.
We find meromorphic functions that characterize the coherent behaviour of 
circuits for logical state preparation of quantum codes
and magic state distillation. 
\end{abstract}


\section{Introduction or: How I stopped worrying and
learnt to love the Bloch sphere}

To a mathematician a \emph{qubit} is the complex
projective line or \emph{Riemann sphere}.
To a physicist, this sounds like crazy-talk, or at best
a gimmick. What is going on?
It turns out there are good reasons for this language,
although it may not be immediately obvious. 
We end up with a description of quantum states and operators that 
looks quite different to the textbook linear-algebra version,
and we spend some time to dwell on the differences.
This formulation of quantum mechanics using projective geometry is 
found in the literature \cite{Gharahi2020,Sumadi2025,Sanchezsoto2026}
which we review below.

The usual axioms of quantum mechanics (see \cite{Nielsen2010} \S 2.2)
begin with defining a \emph{state space} as a Hilbert space $\Hilb$,
and (pure) \emph{states} as being unit vectors therein.
Moreover, we are only interested in states up to \emph{global phase}.
Unit vectors $v\in\Hilb$ modulo global phase $\phi\in U(1)$
are in one-to-one bijection with 1-dimensional subspaces of $\Hilb$.
This is called the \emph{projective space} of $\Hilb$, denoted
as $\Proj(\Hilb)$:
$$
\Proj(\Hilb) := \{ v\in \Hilb-\{0\} \} / \sim
\text{ with } u\sim \lambda u, \lambda\in\Complex-\{0\}. 
$$
These are compact manifolds.
The zero Hilbert space $0$ is sent to the empty manifold $\Proj(0)=\phi$.

For our purposes we will restrict to finite dimensional
complex vector spaces $\CVec$.
The projective space of the $d$-dimensional space $\Complex^d$
we write as $\CProj^{d-1} := \Proj(\Complex^d)$.
It follows that any invertible linear
map $T\in\GL(d,\Complex)$ respects the equivalence
relation $\sim$ and therefore acts on $\CProj^{d-1}$.

\begin{proposition}
Projectivization $\Proj$ is a functor from 
the groupoid of finite dimensional complex
vector spaces and invertible
linear maps to the category of compact manifolds $\Proj:\mathcal{GL}(\Complex)\to \CMan$.
\end{proposition}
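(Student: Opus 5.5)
We define $\Proj$ on objects by $V\mapsto \Proj(V)$, and on an invertible linear map $T:V\to W$ by $\Proj(T)[v] := [Tv]$. Three things then need checking: that each $\Proj(V)$ is a compact manifold, that $\Proj(T)$ is a well-defined morphism (a smooth, indeed holomorphic, map), and that identities and composites are preserved.

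\textbf{Well-definedness on morphisms.} Since $T$ is invertible, $v\neq 0$ implies $Tv\neq 0$, so $[Tv]\in\Proj(W)$ makes sense. If $u=\lambda v$ with $\lambda\in\Complex-\{0\}$, then $Tu=\lambda Tv$ by linearity, so $[Tu]=[Tv]$. Thus $\Proj(T)$ is a well-defined function.

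\textbf{Smoothness.} The map $T$ restricted to $V-\{0\}$ is smooth, and the quotient $q:V-\{0\}\to\Proj(V)$ is a surjective submersion. The map $\Proj(T)$ is the unique function with $\Proj(T)\circ q_V=q_W\circ T$, so it is smooth. To make this concrete, pick bases so that $V=\Complex^d$. In the affine chart $U_i=\{[v]:v_i\neq 0\}$ with coordinates $v_j/v_i$, and the corresponding chart $\{w_k\neq 0\}$ on the target, $\Proj(T)$ is given by ratios of linear forms. These are holomorphic wherever the denominator is nonzero. Applying the same argument to $T^{-1}$ gives a smooth inverse, so $\Proj(T)$ is a diffeomorphism.

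\textbf{Compactness.} $\Proj(\Complex^d)$ is the image of the unit sphere $S^{2d-1}$, which is compact, under the continuous map $q$. It is Hausdorff because it is the quotient of the compact sphere by the compact group $U(1)$. Its manifold structure is the standard chart atlas above, with holomorphic transition maps $v_j/v_i$. For a general $V$, choosing a basis transports this structure, and the result is independent of the basis by the smoothness step.

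\textbf{Functoriality.} Directly, $\Proj(1_V)[v]=[v]$, and $\Proj(ST)[v]=[STv]=\Proj(S)[Tv]=\Proj(S)\Proj(T)[v]$. The zero space gives $\Proj(0)=\emptyset$, and this is consistent because the only map $0\to 0$ is the identity.

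The only step with real content is smoothness and the manifold structure on $\Proj(V)$. We handle it by taking the standard atlas as given and reducing everything to the chart computation, so no step is a genuine obstacle.
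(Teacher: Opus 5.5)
Your proof is correct and follows the same line as the paper, whose only justification is the remark just before the proposition: an invertible $T$ respects the relation $u\sim\lambda u$ and therefore acts on $\Proj(\Complex^d)$, which is exactly your well-definedness step. The paper leaves the manifold structure, smoothness of $\Proj(T)$, compactness and the functor laws implicit; your chart and submersion argument, together with the description of $\Proj(\Complex^d)$ as $S^{2d-1}/U(1)$, correctly supplies those omitted details.
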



A common notation for elements of $\CProj^{d-1}$ is 
$[v] = [v_0 : ... : v_{d-1} ]$ 
with $v_i\in\Complex$ and
$[v_0 : ... : v_{d-1} ] = [\lambda v_0 : ... : \lambda v_{d-1}]$ 
for $\lambda\in \Complex-\{0\}.$
The zero based indexing is supposed to remind us that
we lost a degree of freedom when projectivizing.

The textbook treatment of quantum mechanics
picks out states using unit vectors as a canonical
representative, which leaves some annoying fluff
on the carpet in the form of a global phase.
In other words, 
this fails as a (unique) canonical representative.
When we projectivize instead,
we pick out canonical vectors by just choosing $v_i=1$ for some $i$.
This gets rid of all the fluff,
but now there is a hole in the carpet, corresponding to
states with $v_i=0$. 
These are states ``at infinity''.
So we have traded in our old shaggy carpet, 
but at the cost of now needing multiple carpets, each with
missing points at infinity.

A particularly nice thing happens with \emph{qubits}
which have state space $\CProj^1$. 
We can label the points $ \begin{bmatrix}z\\1\end{bmatrix}$
with a single variable $z\in \Complex$,
which leaves out only one point, 
$\infty:=\begin{bmatrix}1\\0\end{bmatrix}$.
The \emph{extended complex plane} 
is defined as $\CHat := \Complex\cup\{\infty\} = \CProj^1.$
This space is also called the \emph{Riemann sphere}.
See Fig.~\ref{fig-riemann}.

Matrices 
    $\begin{pmatrix}a&b\\c&d\end{pmatrix}\in \GL(2,\Complex)$
then act on qubits as 
$ \begin{pmatrix}a&b\\c&d\end{pmatrix}
 \begin{bmatrix}z\\1\end{bmatrix}
=
 \begin{bmatrix}az+b\\cz+d\end{bmatrix}
=
 \begin{bmatrix}\frac{az+b}{cz+d}\\1\end{bmatrix}.
$
This map $z\mapsto \frac{az+b}{cz+d}$ is known
as a fractional linear transform, or \emph{M\"obius transform}.
The point at infinity goes to $a/c$ 
if $c\ne 0$ otherwise to $\infty$.

Going forwards we will be loose with 
evaluating expressions such as $1/z$ at $z=0$, which is $\infty$, and so on.

For unitary operators $g\in U(2)$ the maps $\Proj(g)$
are rotations, $\Proj(g)\in SO(3)$.
Below we use the notation $g(z) := \Proj(g)(z)$ for
 $z\in \CHat$.


\begin{figure*}[t]
\centering
\begin{subfigure}[b]{0.40\textwidth}
\includegraphics[scale=0.16]{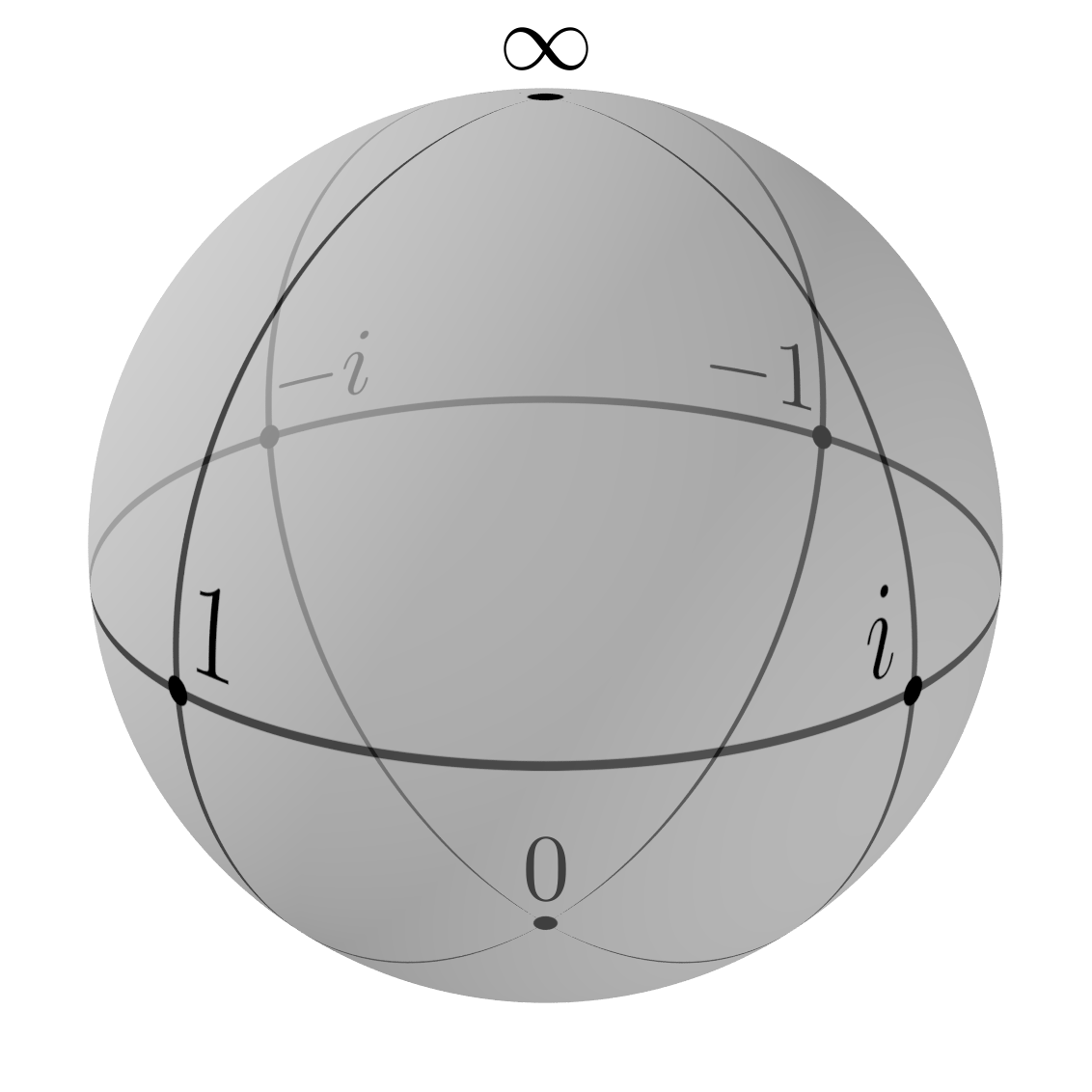}
\caption{The Riemann sphere: states are labelled by
a point $z$ in the extended complex plane.}
\label{fig-riemann}
\end{subfigure}
\begin{subfigure}[b]{0.05\textwidth}
\ \ \ \ 
\end{subfigure}
\begin{subfigure}[b]{0.40\textwidth}
\ \ \ \ 
\raisebox{10pt}{\includegraphics[scale=1.0]{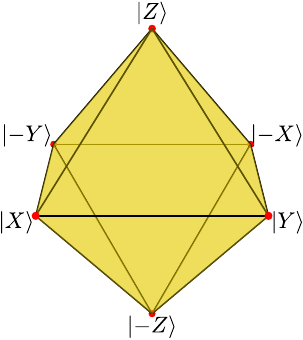}} 
\caption{The Bloch octahedron: states are 
labelled by +1 eigenvectors of Pauli matrices.}
\end{subfigure}
\caption{Two ways to label qubit states.}
\label{fig-qubit}
\end{figure*}

%

Once we apply $\Proj$ to a linear automorphism,
we will lose information about the complex eigenvalues,
but not the eigenvectors.

\begin{proposition}
Eigenvectors of linear operators $T\in\GL(n,\Complex)$
correspond bijectively to fixed points of $\Proj(T):\Proj(\Complex^n)\to \Proj(\Complex^n)$.
\end{proposition}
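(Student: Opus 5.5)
The key is to read the statement correctly: ``eigenvectors'' must mean eigenvectors up to nonzero scalar, that is, one-dimensional subspaces (eigenlines) of $\Complex^n$ on which $T$ acts by a scalar. Individual eigenvectors cannot correspond bijectively to points, since $v$ and $\lambda v$ are both eigenvectors. Projectivizing the set of nonzero eigenvectors gives the subset $E\subseteq \Proj(\Complex^n)$ of classes $[v]$ with $Tv=\mu v$ for some $\mu$. This is well defined because $T(\lambda v)=\mu(\lambda v)$, so the eigenvector property is constant on each equivalence class. I would claim $E$ is exactly the fixed-point set of $\Proj(T)$; the bijection is then the identity on this common subset, $[v]\mapsto[v]$.

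The proof is two inclusions, each a one-line unwinding of the definition $\Proj(T)[v]=[Tv]$, which is well defined because $T$ is linear and invertible.

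First, suppose $v\neq 0$ and $Tv=\mu v$. Since $T$ is invertible, $Tv\neq 0$, so $\mu\neq 0$. Hence $\Proj(T)[v]=[Tv]=[\mu v]=[v]$, and $[v]$ is a fixed point.

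Conversely, suppose $\Proj(T)[v]=[v]$ with $v\neq 0$. Then $[Tv]=[v]$, so by the definition of $\sim$ there is $\lambda\in\Complex-\{0\}$ with $Tv=\lambda v$. Thus $v$, and every nonzero multiple of it, is an eigenvector with eigenvalue $\lambda$. The eigenvalue depends only on the class, since $T(cv)=\lambda(cv)$.

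There is no real obstacle; the only care needed is the normalization just made. I would add a remark on what is lost. An eigenspace of dimension $k$ corresponds to a whole fixed $\CProj^{k-1}$ of points rather than a single point. The eigenvalue itself is forgotten by $\Proj(T)$, matching the sentence before the proposition. Moreover, $\Proj(T)=\Proj(cT)$ for any $c\neq0$, and the two maps have the same fixed points and eigenlines, which confirms that the correspondence is insensitive to scaling. For a qubit this gives the familiar picture: the fixed points of a M\"obius map $z\mapsto\frac{az+b}{cz+d}$ are exactly the eigenlines of $\begin{pmatrix}a&b\\c&d\end{pmatrix}$. There are two of them generically, and one in the parabolic (non-diagonalizable) case.
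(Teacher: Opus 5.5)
Your proof is correct. The paper gives no proof of this proposition and treats it as immediate from $\Proj(T)[v]=[Tv]$; your two inclusions are that implicit argument written out, and reading ``eigenvectors'' as eigenlines is necessary, since taken literally $v$ and $\lambda v$ would be distinct eigenvectors with the same fixed point.
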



This proposition is slightly confusing in the case of qubits:
the projective fixed points are 
complex values $z\in\CHat$, but these correspond to eigen\emph{vectors}
of the linear operator. We have forgotten the eigen\emph{value}.


In the following table we show \emph{Pauli matrices} $g\in U(2)$
together with the action on $\CHat$ and the +1 eigenvectors $\ket{g}$ as
fixed points $g(z)=z$:
$$
\begin{array}{|c|c|c|c|}
\hline
g \in U(2)  & z \mapsto g(z) & \ket{\pm g} & g(z)=z \\
\hline
\hline
Z = \Matrix{1&0\\0&-1}  &  Z(z) = -z & \ket{\pm Z}  &  z= 0, \infty \\
\hline
X = \Matrix{0&1\\1&0}  &  X(z) = 1/z & \ket{\pm X}  &  z=\pm 1  \\
\hline
Y = \Matrix{0&-i\\i&0}  &  Y(z) = -1/z & \ket{\pm Y}  &  z=\pm i  \\
\hline
\end{array}
$$
This correspondence between points of the Riemann sphere and
eigenvectors of operators is illustrated in Fig.~\ref{fig-qubit}.
These six octahedral vertices label the \emph{stabilizer states}.
The matrices $\{ X,Y,Z \}$ generate the \emph{single qubit Pauli group}
$\Pauli_1$ of order sixteen. 
The image $\Proj(\Pauli_1)$ is a group of order four isomorphic
to the Klein-four group $\Integer_2\times\Integer_2.$
As well as the identity rotation,
these are the order two rotational symmetries of the Bloch octahedron
that fix opposite vertices.

In total there are twenty-four rotational symmetries of the octahedron,
and this group is isomorphic to $S_4$.
These rotations all come from the image $\Proj(\Clifford_1)$ of the 
\emph{single qubit Clifford group} generated by
matrices $\{S,H\}$ where
$$
S=\Matrix{1&0\\0&i}, \ \ \ 
H=\frac{1}{\sqrt{2}}\Matrix{1&1\\1&-1}.
$$
These act on the Riemann sphere as
$$
S(z) = -iz, \ \ \ 
H(z) = \frac{z+1}{z-1}.
$$
The complete list of non-trivial rotations that fix octahedral-edges follows,
along with fixed points $z\in\CHat$ and minimal polynomial $p(z)\in\Integer[z]$
for those fixed points:
$$
\begin{array}{c|l|l}
g(z)=z  & \ket{g} &  p(z)  \\
\hline
1\pm\sqrt{2}  & \ket{\pm H}  &  z^2-2z-1 \\
-(1\pm\sqrt{2})  & \ket{\pm XZH}   & z^2+2z-1   \\
+i(1\pm\sqrt{2}) & \ket{\pm ZSHS} & z^4 + 6z^2 + 1  \\
-i(1\pm\sqrt{2}) & \ket{\pm YSHS} & z^4 + 6z^2 + 1  \\
\pm(1+i)/\sqrt{2}  & \ket{\pm 1^{3/8} XS }   & z^4 + 1   \\
\pm(1-i)/\sqrt{2}  & \ket{\pm 1^{3/8} YS }   & z^4 + 1   \\
\end{array}
$$
We call any of these fixed points \emph{$H$-states}.

The non-trivial rotations that fix an octahedral-face 
are $F := SH$ and $XF, ZF, YF$. 
The fixed points, one at the center of each face,
are $z=\pm (1\pm\sqrt{3})(1\pm i)/2$.
These are the eight complex roots of the polynomial
$$
z^{8} + 14z^{4} + 1 = (z^4 - 2z^3 + 2z^2 + 2z + 1)  (z^4 + 2z^3 + 2z^2 - 2z + 1).
$$
We call any of these eight fixed points \emph{$F$-states}.

The stabilizer states, $H$-states and $F$-states are also shown
in Fig.~\ref{klein} below.

\subsection{The impossible outcome}

For us, measurement in quantum mechanics will be axiomatised
via postselection maps.
These are rank degenerate linear maps $\Pi$,
which have non-trivial nullspace in general.
The zero vector  $\Pi(v)=0$
is interpreted as an impossible outcome of the measurement.
However none of the projective spaces have an element
representing the zero vector, so we cannot
just apply $\Proj$ to $\Pi$.

The solution we take is to adjoin (disjointly) an \emph{impossible}
element $\bot$ to all our projective spaces:
$$
    \Proj_\bot(\Complex^n) = \Proj(\Complex^n) \cup \{\bot\}.
$$
Instead of topologizing this pointed space, we just
take these \emph{pointed sets}. This category
is denoted $\PSet$: objects are sets with a distinguished
element and \emph{pointed functions} that preserve the distinguished element.
Given a linear map $T:\Complex^n\to\Complex^m$, we define
the pointed function:
$$
    \Proj_\bot(T)(\bot) = \bot, \ \ \ 
    \Proj_\bot(T)([v]) = \left\{ 
\begin{array}{cl}[T(v)]&\text{ if }T(v)\ne 0,\\\bot&\text{ otherwise.}\end{array} \right.
$$

\begin{proposition}
We have a functor $\Proj_\bot:\CVec\to\PSet$
from the category of finite dimensional complex vector spaces
to the category of pointed sets.
\end{proposition}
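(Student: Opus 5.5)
We verify the functor axioms directly from the definition of $\Proj_\bot$ on objects and morphisms. The plan has three steps: well-definedness, preservation of identities, and preservation of composition.

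\textbf{Well-definedness.} Fix a linear map $T:\Complex^n\to\Complex^m$. If $[v]=[\lambda v]$ with $\lambda\ne 0$, then $T(\lambda v)=\lambda T(v)$. Hence $T(\lambda v)=0$ exactly when $T(v)=0$. When $T(v)\ne 0$ we get $[T(\lambda v)]=[T(v)]$ in $\Proj(\Complex^m)$. So the two-case formula gives a genuine function on equivalence classes. It sends $\bot\mapsto\bot$ by definition, so it is a pointed function $\Proj_\bot(\Complex^n)\to\Proj_\bot(\Complex^m)$. This also covers the zero space: $\Proj_\bot(0)=\{\bot\}$ is the zero object of $\PSet$, and every map out of or into it is forced.

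\textbf{Identities.} Since $1(v)=v\ne 0$ for every representative $v$, we get $\Proj_\bot(1)([v])=[v]$ and $\Proj_\bot(1)(\bot)=\bot$. So $\Proj_\bot(1)$ is the identity.

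\textbf{Composition.} Take $S:\Complex^m\to\Complex^k$ and check $\Proj_\bot(ST)=\Proj_\bot(S)\circ\Proj_\bot(T)$ pointwise, splitting into cases.
\begin{itemize}
\item On $\bot$, both sides give $\bot$.
\item On $[v]$ with $T(v)=0$: the right side is $\Proj_\bot(S)(\bot)=\bot$. Since $ST(v)=S(0)=0$, the left side is also $\bot$.
\item On $[v]$ with $T(v)\ne 0$: the right side is $\Proj_\bot(S)([T v])$. This equals $[STv]$ if $STv\ne 0$, and $\bot$ otherwise, which is exactly the left side.
\end{itemize}

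The only point needing care, and the main (mild) obstacle, is the case where $T$ kills $v$. There, composition works only because the impossible element is absorbing: once an outcome is impossible it stays impossible. This is why $\bot$ must be sent to $\bot$, i.e.\ why we land in pointed sets rather than sets.
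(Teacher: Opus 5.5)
Your verification is correct and complete: the three checks you make (independence of the representative, identities, and composition, split on whether $T$ annihilates $v$) are exactly what is required, and it is the absorbing $\bot$ that makes composition work. The paper states this proposition without proof, leaving this routine check implicit. Your argument supplies it, and it carries over verbatim to an arbitrary finite-dimensional space $V$ in place of $\Complex^n$.
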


We also think of pointed functions as partially defined functions.
The \emph{domain of definition} of a pointed function
$f:A\cup\{\bot\}\to B\cup\{\bot\}$ is the set $\{x\in A | f(x)\ne \bot\}$,
and $f(z)$ is \emph{undefined} at $z$ when $f(z)=\bot.$


%


\subsection{And}

The single qubit statespace $\CProj^1$ is a one-dimensional
complex manifold: 
it has one (complex) degree of freedom.
Therefore, preparing two qubits separably (classically),
is the product space $\CProj^1\times\CProj^1$ which has two
degrees of freedom.
The whole two qubit statespace is $\CProj^3=\Proj(\Complex^4)$,
and these states have three degrees of freedom, such as $[u:v:w:1]$.
Therefore, the two qubit separable statespace is a subspace of the
two qubit statespace.
The embedding $\CProj^1\times\CProj^1\rightarrowtail\CProj^3$ is
given by the projectivization of the tensor product
$([u:v],[w:z])\mapsto [uw:uz:vw:vz]$.
In general this is called the \emph{Segre  embedding} (see \cite{Gharahi2020}):
\begin{align*}
    \Sigma_{a,b} : \CProj^{a-1} \times \CProj^{b-1} &\rightarrowtail \CProj^{ab-1}\\
\bigl( [v], [w] \bigr) = 
\bigl( [v_0:...:v_{a-1}], [w_0:...:w_{b-1}] \bigr)
&\mapsto [v_0 w_0 : v_0 w_1 : ... : v_{a-1}w_{b-1} ]
= [v\tensor w].
\end{align*}
We will also need higher arity Segre embeddings, $\Sigma_{a,b,c,...}$
which are defined straightforwardly, as well as the unit
Segre embedding $\Sigma_a:\CProj^{a-1}\to\CProj^{a-1}$ which is just the identity.

The cartesian category of projective spaces
we are taking is just sets with cartesian product: $\Set_{\times}$.
When we include the impossible element, we must take monoidal
product to be the \emph{smash product}, written as $\wedge$. 
For pointed sets $A\cup\{\bot\}$ and 
$B\cup\{\bot\}$ their smash product is
$(A\times B)\cup\{\bot\}$.
The monoidal unit is the two element set $1_\wedge=\{\star\}\cup\{\bot\}$.
The smash product of pointed functions  $f,g$ is
$$
(f\wedge g)(\bot) = \bot, \ \ \ 
(f\wedge g)(a, b) = \left\{ 
\begin{array}{cl}(f(a),g(b))&\text{ if }f(a)\ne\bot\text{ and }g(b)\ne\bot,
\\\bot&\text{ otherwise.}\end{array} \right.
$$
The monoidal category of pointed functions
and smash product is written $\Set_{\bot,\wedge}$.

Extending the Segre embedding to pointed projective spaces
is also straightforward. The pointed Segre embedding,
also written as $\Sigma$, is the map:
\begin{align*}
\CProj^{a-1}_\bot \wedge \CProj^{b-1}_\bot &\to \CProj^{ab-1}_\bot \\
\bot,\ ([u],\bot),\ (\bot,[v]) &\mapsto \bot \\
([u], [v]) &\mapsto \Sigma_{a,b}([u],[v]).
\end{align*}

\begin{proposition}\label{lax}
Using the pointed Segre embedding as laxator, 
$\Proj_\bot$ becomes a lax monoidal functor 
$$\Proj_\bot:\CVec_{\tensor} \to \text{Set}_{\bot,\wedge}.$$
\end{proposition}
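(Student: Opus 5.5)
We have to provide the data of a lax monoidal functor and check its axioms. The data are the laxator components
$$\mu_{V,W} : \Proj_\bot(V)\wedge\Proj_\bot(W)\to\Proj_\bot(V\tensor W), \qquad ([v],[w])\mapsto [v\tensor w],$$
with everything involving $\bot$ sent to $\bot$, together with a unit map $\eta : 1_\wedge\to\Proj_\bot(\Complex)=\{[1],\bot\}$ sending $\star\mapsto[1]$ and $\bot\mapsto\bot$.

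\textbf{Well-definedness.} First I will check that $\mu_{V,W}$ is a well-defined pointed function. The tensor product $v\tensor w$ is nonzero whenever $v,w\ne 0$, so $[v\tensor w]$ is a genuine point. Rescaling gives $\lambda v\tensor \lambda' w=\lambda\lambda'(v\tensor w)$, so the class does not depend on representatives. With a choice of bases this is exactly the pointed Segre embedding $\Sigma$ defined above; the basis-free formula avoids any dependence on coordinates. Finally, $\eta$ is plainly a pointed bijection.

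\textbf{Naturality.} Next I will verify naturality of $\mu$ in both variables: for linear $S:V\to V'$ and $T:W\to W'$,
$$\Proj_\bot(S\tensor T)\circ\mu_{V,W}=\mu_{V',W'}\circ(\Proj_\bot(S)\wedge\Proj_\bot(T)).$$
This is the only step with a genuine case split, and I expect it to be the main, though mild, obstacle, because of the impossible element. The key fact is that $(S\tensor T)(v\tensor w)=Sv\tensor Tw$, and an elementary tensor vanishes iff one factor vanishes. There are three cases.
\begin{itemize}
\item If $Sv\ne0$ and $Tw\ne0$, both sides give $[Sv\tensor Tw]$.
\item If $Sv=0$ or $Tw=0$, the right side is $\bot$ by the definition of the smash product of pointed functions. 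The left side is also $\bot$, since $Sv\tensor Tw=0$.
\item Inputs involving $\bot$ go to $\bot$ on both sides.
\end{itemize}
This is precisely where the smash product, rather than the cartesian product, is needed.

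\textbf{Coherence.} Finally I will check the associativity and unit coherence diagrams. Here I will use the standard associator and unitors of $\CVec_\tensor$, namely $(u\tensor v)\tensor w\mapsto u\tensor(v\tensor w)$ and $1\tensor v\mapsto v$, together with the associator and unitors of $\Set_{\bot,\wedge}$ inherited from the cartesian product.
\begin{itemize}
\item \emph{Associativity.} On a triple $([u],[v],[w])$ of genuine points, both paths yield the class of $u\tensor v\tensor w$ transported along the associator. Any triple containing $\bot$ is sent to $\bot$ along both paths.
\item \emph{Units.} We have $(\star,[v])\mapsto([1],[v])\mapsto[1\tensor v]\mapsto[v]$, which agrees with the left unitor $(\star,[v])\mapsto[v]$. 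The element $(\bot,[v])$ collapses to $\bot$ in the smash product, so that case is trivial. The right unit is symmetric.
\end{itemize}
Functoriality of $\Proj_\bot$ itself is the earlier proposition, so with these checks the proof is complete.
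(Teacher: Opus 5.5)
Your proof is correct and follows the same direct verification as the paper's appendix: laxator $([v],[w])\mapsto[v\tensor w]$, unit $\star\mapsto[1]$, and elementwise checks of associativity and unitality. The differences are that you keep the standard associators and unitors rather than strictifying both categories, and that you actually carry out the naturality check, which the paper leaves as a to-do.

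One small quibble: naturality would hold equally well with the cartesian product of pointed sets, since you can send any pair with a $\bot$ component to $\bot$. It is the unit law, not naturality, that breaks there. With the terminal unit $\{\bot\}$ the unit axiom would demand $(\bot,[v])\mapsto[v]$, whereas the Segre laxator sends this element to $\bot$.
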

\begin{proof}See App.~\ref{applax}.\end{proof}


\subsection{Summary}

In summary, the kinematic axioms of projective quantum mechanics are:

(1) states are points in projective spaces, or otherwise impossible

(2) operators are maps of projective spaces that preserve the impossible state  

(3) composition of states is lax monoidal 

\section{Meromorphic functions}

A \emph{meromorphic function} on the Riemann sphere is
a fractional linear transform
$$
    f(z) = \frac{p(z)}{q(z)}
$$
where $p,q\in\Complex[z]$ are polynomials in $z$ with coefficients in $\Complex$, 
without any roots in common,
and $q\ne 0$.
\footnote{
At times we will forget the distinction between 
formal elements such as $p\in\Complex[z]$ and functions $p:\Complex\to\Complex$.}
This set of meromorphic functions is the field
of \emph{rational functions in one variable}, denoted $\Complex(z)$.
(See \cite{Girondo2012} Prop. 1.26.)
The \emph{degree} of $f$ is 
the maximum of the degrees of $p$ and $q$.  (See \cite{Girondo2012} Ex. 1.58.)
This counts, with multiplicity,
the number of points in any fiber $f^{-1}(w)$ over $w\in\CHat$.
A \emph{branch point} of $f$ is $z\in\CHat$ such that $f'(z)=0$.
These are the stationary points of $f$.
The multiplicity of such a branch point is the
``number of sheets that meet at $z$'', written $m_z(f)$.
In practice, 
this is one plus the order of the zero of $f'(z)$.
To find all the stationary points
sometimes we need to change variables, such as $z\mapsto 1/z$,
in order to deal with the point at infinity.

The following theorem is a baby version of the Riemann-Hurwitz theorem (see \cite{Girondo2012} Thm. 1.60.):
\begin{theorem}\label{RH}
Given $f\in\Complex(z)$ with degree $d$ and branch points $B\subset \CHat$, we have
$$
    \sum_{z\in B} (m_z(f) - 1) = 2(d-1).
$$
\end{theorem}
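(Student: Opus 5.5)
We prove Theorem~\ref{RH} directly by counting zeros of a Wronskian-type polynomial; no topology is needed. Assume $d\ge 1$, since for $d=0$ the constant function has no sensible branch points and both sides should be read as $0$. First we normalize. Composing $f$ with a M\"obius transform on the target, $w\mapsto g(w)$ for $g\in\GL(2,\Complex)$, does not change the degree or the local multiplicities $m_z(f)$. Here $m_z(f)$ is read as the local degree, i.e.\ the multiplicity of $z$ as a solution of $f(\zeta)=f(z)$. Composing on the source with a M\"obius transform likewise only relabels the branch points. So we may assume $f(\infty)\neq\infty$, and also that $\infty$ is not a branch point. This is possible because the branch points are finite in number, which the count below shows independently.

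With this normalization, write $f=p/q$ with $\gcd(p,q)=1$ and $d=\max(\deg p,\deg q)$. Since $f(\infty)$ is finite, $\deg p\le\deg q=d$. Moreover $f(\infty)=$ (lead coefficient ratio); subtracting a constant via a target translation, we may assume $f(\infty)=0$, so $\deg p<d$. Since $\infty$ is unbranched, $f$ has a simple zero at $\infty$, forcing $\deg p=d-1$.

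Now consider the Wronskian $W=p'q-pq'$. Its leading terms are of degree $2d-2$, with coefficient $(d-1)-d=-1$ times the product of the leading coefficients, which is nonzero. Hence $\deg W=2d-2$ exactly.

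Next we check, for each finite $z$, that $\mathrm{ord}_z W=m_z(f)-1$. There are two cases.
\begin{itemize}
\item If $q(z)\neq 0$, then $W=q^2 f'$ and $q^2$ does not vanish at $z$. So $\mathrm{ord}_z W=\mathrm{ord}_z f'=m_z(f)-1$.
\item If $q(z)=0$, then $f$ has a pole of some order $k$ at $z$, and $p(z)\ne 0$ by coprimality. Write $q=(\zeta-z)^k r$ with $r(z)\neq0$. Then $W=p'q-pq'$ has order exactly $k-1$ at $z$, since the term $-p\,k(\zeta-z)^{k-1}r$ dominates. The local degree at a pole of order $k$ is $k$, so again $\mathrm{ord}_zW=m_z(f)-1$.
\end{itemize}
This second case is where the convention ``change variables $z\mapsto1/z$'' is used implicitly: we use $1/f$ near the pole, and $1/f$ has Wronskian $-W$.

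Summing over all finite $z$, the left-hand side of the theorem equals the number of zeros of $W$ counted with multiplicity. This uses that $\infty$ contributes $0$ and that non-branch points contribute $0$. By the fundamental theorem of algebra that number is $\deg W=2d-2=2(d-1)$.

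The main obstacle is the bookkeeping at the poles and at $\infty$. It requires verifying that $m_z$ is invariant under M\"obius changes of coordinates on source and target, so that the reduction is legitimate. The definition ``one plus the order of the zero of $f'$'' breaks down at poles and must be replaced by the local degree, which agrees with it at regular points.
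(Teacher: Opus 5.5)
Your proof is correct, and it takes a genuinely different route from the paper. The paper does not prove this theorem itself. It cites the general Riemann--Hurwitz formula (Girondo--Gonz\'alez-Diez, Thm.~1.60), whose proof is topological: lift a triangulation of the target whose vertices include the critical values, compare Euler characteristics, and specialize $2g_X-2=d(2g_Y-2)+\sum_z(m_z(f)-1)$ to $g_X=g_Y=0$. Your argument is instead algebraic. You normalize by M\"obius maps to $f=p/q$ with $\deg q=d$ and $\deg p=d-1$, then count zeros of the single polynomial $W=p'q-pq'$ using only the fundamental theorem of algebra. Your two verifications hold: $\mathrm{ord}_zW=m_z(f)-1$ at every finite $z$, poles included, and the leading coefficient $-a_{d-1}b_d$ gives $\deg W=2d-2$. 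The finiteness of the branch set, which you need for the normalization, is not circular. The local computation uses only $\gcd(p,q)=1$, and $W=q^2f'\neq0$ for nonconstant $f$. Your route buys self-containedness and a direct link to the paper's computations. $W$ is exactly the numerator of $f'$ that the examples factor, e.g.\ $5(z^8+14z^4+1)$ for the $[[5,1,3]]$ decoder. It also makes explicit that $m_z(f)$ must be read as local degree at poles, where the literal condition $f'(z)=0$ does not apply. You could even drop the source normalization, since in general $\deg W=2d-2-(m_\infty(f)-1)$. The cited route buys generality: it covers any nonconstant holomorphic map between compact Riemann surfaces of arbitrary genus, where no global polynomial Wronskian of known degree is available. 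It also explains $2(d-1)$ conceptually as $d\,\chi(\CHat)-\chi(\CHat)$.
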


We also use \emph{multivariate meromorphic functions}
$f(z_1,...,z_n)$ which are defined to be quotients 
$p/q$ of polynomials $p,q\in\Complex[z_1,...,z_n]$ without
common zeros, and $q\ne 0$.

\begin{proposition}\label{mero}
Given a linear operator $T:(\Complex^2)^{\tensor n}\to \Complex^2$,
the function $f:\bigtimes_n\CProj^{1} \to \CProj^1$ given by
$f=\Proj_\bot(T)\Sigma_{2,...,2}$ is a meromorphic function of $n$ variables
on its domain of definition:
$$
f(z_1,...,z_n) = \left\{\begin{array}{l}
\frac{p(z_1,...,z_n)}{q(z_1,...,z_n)} \\
\\
\bot
\end{array}\right.
$$
\end{proposition}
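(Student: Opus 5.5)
The proof will be a direct computation in the affine charts $z_i\mapsto\begin{bmatrix}z_i\\1\end{bmatrix}$ used to identify $\CProj^1$ with $\CHat$. On the chart where every $z_i\neq\infty$, the Segre embedding $\Sigma_{2,\dots,2}$ sends $(z_1,\dots,z_n)$ to the class of
$$
v(z)=\begin{bmatrix}z_1\\1\end{bmatrix}\tensor\cdots\tensor\begin{bmatrix}z_n\\1\end{bmatrix}\in(\Complex^2)^{\tensor n}.
$$
Its coordinate indexed by a bit string $b\in\{0,1\}^n$ is the multilinear monomial $\prod_{i:\,b_i=0}z_i$.

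Applying $T$, a $2\times 2^n$ matrix, gives $T v(z)=(P(z),Q(z))^{T}$. Here $P,Q\in\Complex[z_1,\dots,z_n]$ are multilinear polynomials, with coefficients read off from the two rows of $T$. By the definition of $\Proj_\bot$, $f(z)=\bot$ exactly when $P(z)=Q(z)=0$. Otherwise $f(z)=[P(z):Q(z)]$, which in the chart of $\CHat$ is the value $P(z)/Q(z)$, equal to $\infty$ when $Q(z)=0$.

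Next I will remove common factors. Since $\Complex[z_1,\dots,z_n]$ is a UFD, let $g=\gcd(P,Q)$, and write $P=gp$ and $Q=gq$ with $p,q$ coprime. Then $f=p/q$ wherever $g(z)\neq0$. This covers every point where $f$ is defined, except possibly points where $g$ vanishes. I expect this step to be the main obstacle. When $n\ge 2$, coprime polynomials can still have common zeros, for example $z_1,z_2$ at the origin, so "without common zeros" has to be read as "coprime." Moreover, a common factor $g$ forces $P=Q=0$ on $\{g=0\}$, so these points are already $\bot$. The formula
$$
f=\begin{cases}p/q & \text{if }(P(z),Q(z))\neq(0,0),\\ \bot &\text{otherwise,}\end{cases}
$$
is therefore consistent: $f$ agrees with the rational function $p/q$ on its whole domain of definition. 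This is exactly the piecewise form in the statement. One degenerate case must be noted: if $Q=0$ identically, then $q$ is a nonzero constant only if $P\neq0$. If $P=Q=0$, then $f\equiv\bot$ and the statement is vacuous.

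Finally I will handle points with some $z_i=\infty$, i.e.\ $\begin{bmatrix}1\\0\end{bmatrix}$. In the $i$-th tensor factor, replace $\begin{bmatrix}z_i\\1\end{bmatrix}$ by the projectively equal vector $\begin{bmatrix}1\\1/z_i\end{bmatrix}$ and let $1/z_i\to0$. Because $P$ and $Q$ have degree at most one in each variable, this amounts to homogenizing separately in each $z_i$. The resulting values agree with the limits of $p/q$, interpreted on $\CHat$ as in the paper's convention of evaluating $1/z$ at $z=0$. This shows that the rational expression $p/q$ describes $f$ on all of $\bigtimes_n\CProj^1$ minus the $\bot$ locus.
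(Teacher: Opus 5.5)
Your argument is correct and follows the paper's proof: write the Segre vector in the affine chart as multilinear monomials, apply $T$ to get $[P:Q]$, and cancel common factors. You are more careful than the paper in two places. You read ``without common zeros'' as ``coprime'', which for $n\ge 2$ is the only achievable notion. You also check the points where some $z_i=\infty$.

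One slip: in your degenerate case $P$ and $Q$ are reversed.
\begin{itemize}
\item If $Q\equiv 0$ and $P\not\equiv 0$, then $g=P$ up to a unit. So $p$ is constant and $q=0$, which means $f\equiv\infty$ wherever it is defined. That is not a meromorphic function under the paper's definition, which requires $q\neq 0$.
\item It is the case $P\equiv 0\not\equiv Q$ that makes $q$ a nonzero constant, giving $f\equiv 0$.
\end{itemize}
The monomials $\prod_{i:b_i=0}z_i$ are linearly independent, so $Q\equiv 0$ exactly when the second row of $T$ is zero. That case therefore has to be excluded from the statement, or the constant $\infty$ admitted as meromorphic. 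The paper's proof omits this caveat as well.
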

\begin{proof}
Applying the Segre embedding to 
the $n$-qubit state $([z_1:1], ..., [z_n:1])$ 
gives $[z_1z_2...z_n : ... : 1]$,
where each entry is a monomial in the $z_i$ or $1$.
The action of $T$ on this (projective) vector is
$[p(z_1,...,z_n):q(z_1,...,z_n)]=[\frac{p(z_1,...,z_n)}{q(z_1,...,z_n)}:1]$
where $p,q\in\Complex[z_1,...,z_n]$ with common zeros eliminated.
\end{proof}

\subsection{Klein's Octahedral function}

We define a meromorphic function,
the \emph{octahedral function} as
$$
E_7(z) := \frac{108 z^4(z^4-1)^4}{(z^8 + 14z^4 + 1)^3}
$$
This is used to detect when points are related
by a Clifford rotation.
We could also think of this as a coequalizer of the image of
$\Clifford_1$ under $\Proj$.

\begin{proposition}\label{E7}
Given $w,z\in\CHat$, then $w\in\text{Orbit}_{\Clifford_1}(z)$
if and only if $E_7(w)=E_7(z)$.
\end{proposition}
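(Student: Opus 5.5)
The approach is the standard degree-counting argument: show that $E_7$ is invariant under the Clifford rotations, and then compare the size of a generic orbit with the degree of $E_7$.

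\textbf{Step 1 (invariance).} $\Proj(\Clifford_1)$ is generated by the rotations $S(z)=-iz$ and $H(z)=\frac{z+1}{z-1}$, so it suffices to check $E_7\circ S=E_7$ and $E_7\circ H=E_7$.

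For $S$, substituting $-iz$ leaves $z^4$ unchanged, and $E_7$ depends on $z$ only through $z^4$. So this case is immediate.

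For $H$, substitute $z\mapsto (z+1)/(z-1)$ and clear denominators, using homogeneous forms. Write $E_7=\frac{108\,\phi(z,1)}{\psi(z,1)^3}$ with
\[
\phi(x,y)=x^4y^4(x^4-y^4)^4,\qquad \psi(x,y)=x^8+14x^4y^4+y^8,
\]
both of degree $24$ once $\psi$ is cubed. The claim then reduces to
\[
\phi(x+y,\,x-y)=c\,\phi(x,y),\qquad \psi(x+y,\,x-y)=c'\,\psi(x,y),
\]
with $c=c'^3$. Conceptually, this holds because the zeros of $\phi$ are the six stabilizer points, each with multiplicity $4$, and the zeros of $\psi$ are the eight $F$-states. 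Both sets are permuted by $H$, as shown in the tables earlier in the paper. A rational function whose zero divisor and pole divisor are both preserved is preserved up to a constant. The constant is fixed by evaluating at one convenient point: $H$ fixes $1+\sqrt2$, so comparing $E_7(H(1+\sqrt2))$ with $E_7(1+\sqrt2)$ forces the constant to be $1$, because $E_7(1+\sqrt2)$ is neither $0$ nor $\infty$.

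\textbf{Step 2 (counting).} Invariance gives the forward direction: if $w\in\text{Orbit}_{\Clifford_1}(z)$ then $E_7(w)=E_7(z)$.

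For the converse, note that $\deg E_7=24$, since the numerator has degree $4+16=20$, the denominator has degree $24$, and there are no common roots. The projective Clifford group has order $24$. Hence each fiber $E_7^{-1}(E_7(z))$ contains the whole orbit of $z$ and has exactly $24$ points counted with multiplicity.

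\begin{itemize}
\item \emph{Generic points.} If $z$ has trivial stabilizer in $\Proj(\Clifford_1)$, its orbit has $24$ distinct points. These already exhaust the fiber, so the fiber equals the orbit.
\item \emph{Special points.} The nontrivial stabilizers occur only at the stabilizer states (orbit of size $6$), the $H$-states (size $12$) and the $F$-states (size $8$). These are precisely the fixed points of nontrivial rotations listed earlier.
\end{itemize}

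\textbf{Step 3 (special fibers).} I expect these orbits to be the main obstacle. Here I would compute directly:
\begin{itemize}
\item $E_7^{-1}(0)$ consists of the zeros of $\phi$, namely $0$, $\infty$, $\pm1$, $\pm i$, which is the stabilizer orbit.
\item $E_7^{-1}(\infty)$ consists of the zeros of $\psi$, which is exactly the set of $F$-states.
\item For the $H$-states, evaluate $E_7(1+\sqrt2)=w_0$. Then show $E_7-w_0$ has numerator equal to a constant times the square of the product of the $H$-state polynomials, namely
\[
\bigl((z^2-2z-1)(z^2+2z-1)(z^4+6z^2+1)(z^4+1)\bigr)^2.
\]
\end{itemize}

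Alternatively, Theorem~\ref{RH} handles all three special fibers at once. The total ramification is $2(24-1)=46$. Invariance forces every point of a size-$k$ orbit to have the same multiplicity $24/k$, which gives $6\cdot3+12\cdot1+8\cdot2=46$. The three special orbits therefore already account for all the branching, so every other fiber is unramified, and each special fiber consists of its orbit alone. This proves the converse, and hence the proposition.
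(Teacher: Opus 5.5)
The paper gives no proof of this proposition beyond a citation to Klein, so your argument is necessarily a different route. It is correct, and it is essentially the classical proof that $E_7$ is the quotient map $\CHat\to\CHat/\Proj(\Clifford_1)$: invariance under the generators, combined with $\deg E_7=|\Proj(\Clifford_1)|=24$.

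The computations you defer all check out. First, $\phi(x+y,x-y)=2^{12}\phi(x,y)$ and $\psi(x+y,x-y)=16\,\psi(x,y)$. This also confirms directly that $H$ permutes the $F$-states, which the tables alone do not show. Second, $w_0=E_7(1+\sqrt2)=1$, and
\[
108z^4(z^4-1)^4-(z^8+14z^4+1)^3=-(z^{12}-33z^8-33z^4+1)^2 ,
\]
where $z^{12}-33z^8-33z^4+1$ is exactly the product of your four $H$-state polynomials. This agrees with the paper's figure, where the stabilizer, $H$- and $F$-states lie over $0$, $1$, $\infty$.

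One step in your Riemann--Hurwitz alternative needs tightening. Invariance alone only makes $m_p(E_7)$ constant along an orbit. With the fiber count, that gives $m_p\le 24/k$, not equality. The argument still closes:
\begin{itemize}
\item Points with trivial stabilizer are unramified, since their fibers consist of $24$ distinct points.
\item So all $46$ units of ramification lie on the $26$ special points.
\item Their maximal possible contribution is $6\cdot3+12\cdot1+8\cdot2=46$.
\item Hence every bound is attained, and each special orbit fills its fiber.
\end{itemize}
Alternatively, conjugate a stabilizer of order $r$ at $p$ to $u\mapsto\zeta u$, with $\zeta$ a primitive $r$-th root of unity. Then $E_7$ (or $1/E_7$) is a power series in $u^r$ near $p$, so $m_p\ge r=24/k$ directly, and Theorem~\ref{RH} is not even needed.

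What your route buys over the citation is a proof using only facts already stated in the paper: the order and generators of $\Proj(\Clifford_1)$, the fixed-point tables, and Theorem~\ref{RH}. What the citation buys is the link to Klein's invariant forms, of which your identity above is the octahedral syzygy.
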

The proof of this proposition goes back to Klein's book 
from 1888 \cite{Klein2003} on invariant theory and
is connected to the modern theory of modular curves~\cite{Zvonkin2008},
and dessin d'enfants~\cite{Girondo2012}.
See Fig.~\ref{klein}.

\begin{figure*}[t]
\centering
\begin{subfigure}[b]{0.40\textwidth}
\includegraphics[scale=0.16]{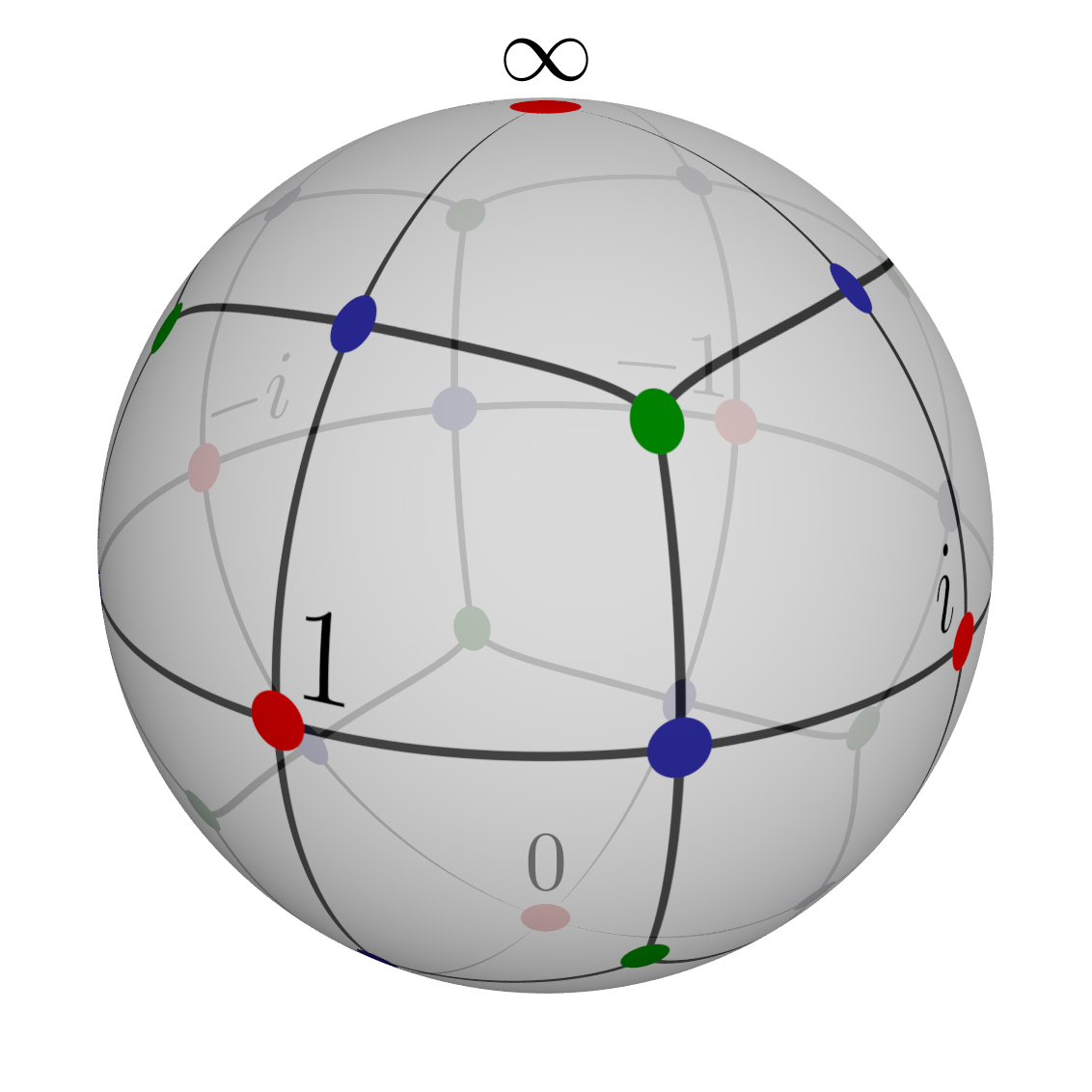}
\caption{The Riemann sphere $\CHat=\CProj^1$ with the six stabilizer states
in red, twelve $H$-states in blue, 
and eight $F$-states in green.}
\label{klein}
\end{subfigure}
\begin{subfigure}[b]{0.05\textwidth}
\ \ \ \ 
\end{subfigure}
\begin{subfigure}[b]{0.40\textwidth}
\ \ \ \ 
\raisebox{10pt}{\includegraphics[scale=0.16]{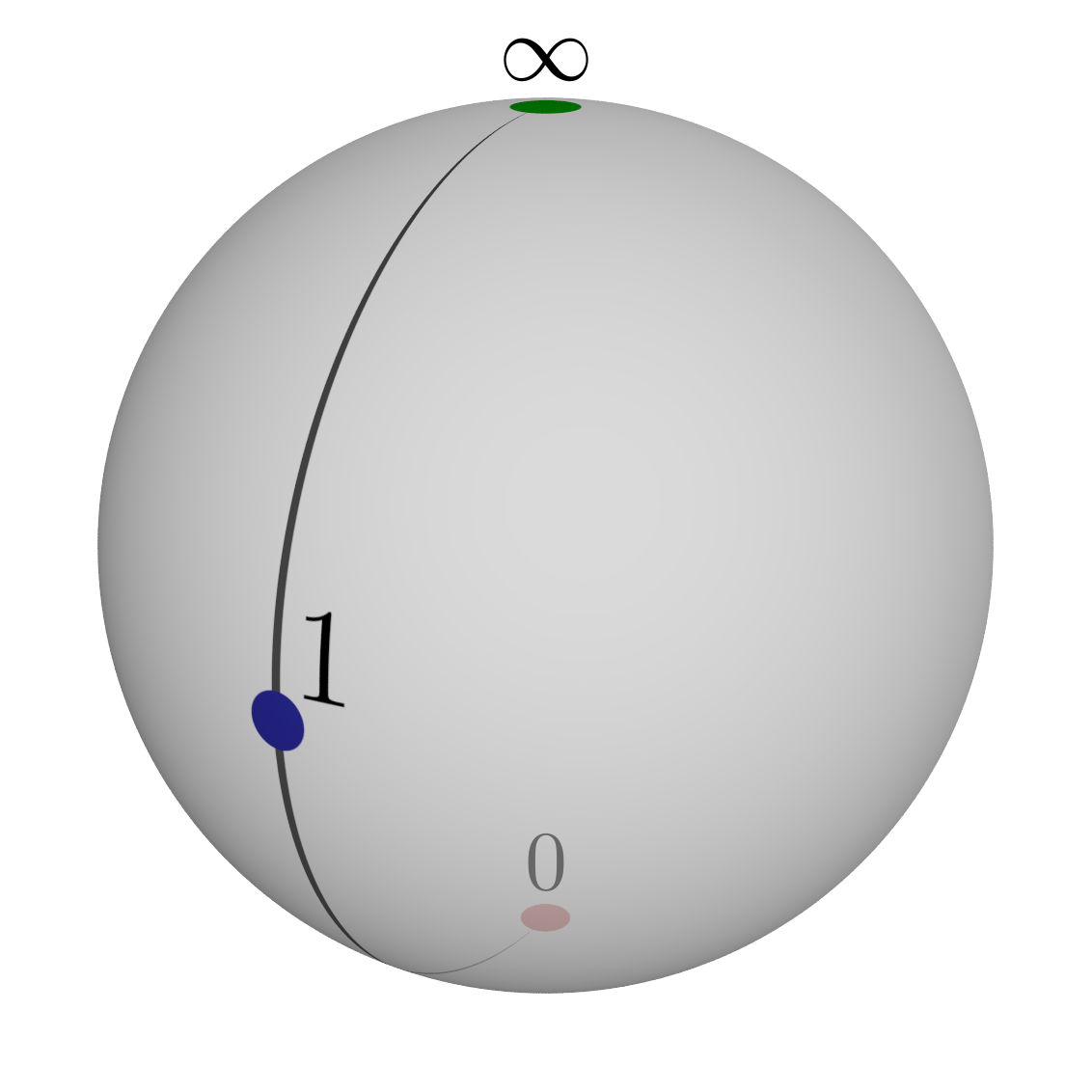}} 
\caption{The Riemann sphere with the point $z=0$ in red,
$z=1$ in blue, and $z=\infty$ in green. This is the image
of Klein's octahedral function $E_7$.}
\end{subfigure}
\caption{We show an octahedral ``design'' on the Riemann
sphere, and its image under $E_7$ on the right.
Each of the twenty-four tiles on the left is mapped to a
single region on the right which covers the sphere.
}
\label{klein}
\end{figure*}



\section{Circuits and diagrams}

We recall the tensor network notation known as $ZX$-calculus~\cite{Van2020}.
Our diagrams flow from right-to-left, in agreement with algebraic (Dirac)
notation.
We define the $Z$-type (white) or $X$-type (black) \emph{spider} with
$m$ outputs and $n$ inputs, labelled with a non-zero
\emph{multiplicative phase} $z\in\Complex/\{0\}$:
\begin{align*}
\igc{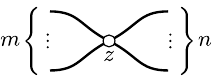}\ \ & :=\ \ \ket{+Z}^{\tensor m}\bra{+Z}^{\tensor n} + z \ket{-Z}^{\tensor m}\bra{-Z}^{\tensor n} \\
\igc{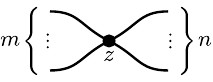}\ \ & :=\ \ \ket{+X}^{\tensor m}\bra{+X}^{\tensor n} + z \ket{-X}^{\tensor m}\bra{-X}^{\tensor n}
\end{align*}
When the multiplicative phase $z$ is omitted, it defaults to $z=1$.
We also define the following two \emph{$W$-spiders}:
$$
\igc{w_ww} :=
\begin{pmatrix}0&1&1&0\\0&0&0&1 \end{pmatrix},
\ \ \ \ 
\igc{w_} :=
\igc{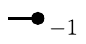} 
=
\begin{pmatrix}0\\1 \end{pmatrix}.
$$

We recall  the three structures 
$\Bigl( \igc{g_}, \igc{g_gg} \Bigr) $,
$\Bigl( \igc{r_}, \igc{r_rr} \Bigr) $ and
$\Bigl( \igc{w_}, \igc{w_ww} \Bigr) $
each give complex unital associative algebras on 
$\Complex^2$ \cite{Heunen2019,Coecke2010}.

\begin{proposition}\label{zx}
Projectivization induces the three 
(associative, unital) monoid structures on the object 
$\CProj^1_\bot$ 
in  $\Set_{\bot,\wedge}$:
\begin{center}
\begin{tabular}{|c|c|c||c|c||c|c|}
\hline
& & & & & & \\
$\CVec$ matrix & $\begin{pmatrix}1\\1 \end{pmatrix}$ &
$\begin{pmatrix}1&0&0&0\\0&0&0&1 \end{pmatrix}$ &
$\begin{pmatrix}1\\0 \end{pmatrix}$ &
$\begin{pmatrix}1&0&0&1\\0&1&1&0 \end{pmatrix}$ &
$\begin{pmatrix}0\\1 \end{pmatrix}$ &
$\begin{pmatrix}0&1&1&0\\0&0&0&1 \end{pmatrix} $
\\
 & & & & & & \\
diagram & \includegraphics{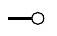} &
\includegraphics{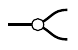} &
\includegraphics{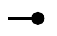} &
\includegraphics{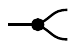} &
\includegraphics{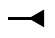} &
\includegraphics{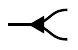}
\\
 & & & & & & \\
projectivization & $1 \mapsfrom $ &
$ \left.\begin{array}{c}wz\\\bot\end{array}\right\} \mapsfrom (w,z) $   &
$\infty \mapsfrom $ &
$ \left.\begin{array}{c}\frac{wz+1}{w+z}\\\bot\end{array}\right\} \mapsfrom (w,z) $   &
$ 0 \mapsfrom $ &
$ \left.\begin{array}{c}w+z\\\bot\end{array}\right\} \mapsfrom (w,z) $ 
\\
 & & & & & & \\
undefined at 
 & & $(\infty,0),(0,\infty)$
& & $(1,-1),(-1,1)$ & & $(\infty,\infty)$ \\
\hline
\end{tabular}
\end{center}
\end{proposition}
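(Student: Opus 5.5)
The proof is structural first and computational second. By Proposition~\ref{lax}, $\Proj_\bot$ is lax monoidal with laxator the pointed Segre embedding $\Sigma$. A lax monoidal functor sends monoids to monoids: if $(\mu,\eta)$ is a monoid on $V$ in $\CVec_\tensor$, then
$$\bigl(\Proj_\bot(\mu)\circ\Sigma,\ \Proj_\bot(\eta)\circ\epsilon\bigr)$$
is a monoid on $\Proj_\bot(V)$ in $\Set_{\bot,\wedge}$. Here $\epsilon:1_\wedge\to\Proj_\bot(\Complex)$ is the unit comparison map, sending $\star$ to the unique point of $\CProj^0$. Associativity and unitality follow by the standard diagram chase, using naturality of $\Sigma$ and its coherence with the associator and unitors. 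I will take as given that the three pairs of matrices are unital associative algebras on $\Complex^2$ (the cited fact). This gives three associative, unital monoid structures on $\CProj^1_\bot$ immediately. If one wants to avoid the categorical lemma, associativity can be checked directly: $[\mu(\mu(u\tensor v)\tensor w)] = [\mu(u\tensor\mu(v\tensor w))]$ holds as vectors. Both sides are $\bot$ exactly when the corresponding vector vanishes, and an intermediate zero propagates by bilinearity.

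The remaining work is to identify the three structures explicitly, following the method of the proof of Proposition~\ref{mero}.

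\textbf{Units.} The vectors $(1,1)^T$, $(1,0)^T$ and $(0,1)^T$ give the points $[1:1]=1$, $[1:0]=\infty$ and $[0:1]=0$.

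\textbf{Multiplications.} Write $w=[w:1]$ and $z=[z:1]$, so the Segre image is $(wz,w,z,1)$.
\begin{itemize}
\item The $Z$-multiplication gives $[wz:1]$, i.e.\ $wz$.
\item The $X$-multiplication gives $[wz+1:w+z]$, i.e.\ $\frac{wz+1}{w+z}$.
\item The $W$-multiplication gives $[w+z:1]$, i.e.\ $w+z$.
\end{itemize}

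The main obstacle is determining the undefined locus exactly, including the points at infinity. The product is $\bot$ precisely when the matrix kills $u\tensor v$, so I will solve this for general homogeneous coordinates $u=(u_0,u_1)$ and $v=(v_0,v_1)$, not only in the affine chart.
\begin{itemize}
\item \textbf{$Z$-case.} We need $u_0v_0=0$ and $u_1v_1=0$. Since both $u$ and $v$ are nonzero, this forces $\{u,v\}=\{[1:0],[0:1]\}$, which is $(\infty,0)$ or $(0,\infty)$.
\item \textbf{$X$-case.} We need $u_0v_0+u_1v_1=0$ and $u_0v_1+u_1v_0=0$. Adding and subtracting these gives $(u_0+u_1)(v_0+v_1)=0$ and $(u_0-u_1)(v_0-v_1)=0$, so one of $u,v$ is $[1:-1]$ and the other is $[1:1]$. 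This is $(1,-1)$ or $(-1,1)$.
\item \textbf{$W$-case.} We need $u_0v_1+u_1v_0=0$ and $u_1v_1=0$. If $u_1=0$, then $u_0\neq0$, so $v_1=0$. If instead $v_1=0$, the same argument gives $u_1=0$. Either way $u=v=\infty$, which is the single undefined point $(\infty,\infty)$.
\end{itemize}

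Finally, the affine formulas extend correctly at $\infty$: for example $\infty\cdot z=\infty$ for $z\neq0$, and $\infty+z=\infty$ for $z\neq\infty$. This completes the table.
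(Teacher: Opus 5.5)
Your proposal is correct and follows the same route as the paper: the formulas come from applying each matrix to the Segre coordinates $(wz,w,z,1)$, and the monoid axioms follow either from $\Proj_\bot$ being lax monoidal (so it preserves monoids) or from direct inspection. Your explicit homogeneous-coordinate solution of $\mu(u\tensor v)=0$ goes beyond the paper, which does not spell this out, and it makes the ``undefined at'' row rigorous, including at $\infty$.
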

\begin{proof}
The formulas come from applying the matrix to projective coordinates
given by the Segre embedding.
These give monoid structures 
either by direct inspection, or by noting that 
lax monoidal functors preserve monoids.
\end{proof}

In Ref.\cite{Coecke2011} they build the field of rational numbers
formally out of the $Z$ and $W$ spider,
which forms the $ZW$-calculus \cite{Coecke2010}. 
We now see that  this comes directly from reformulating
linear algebra as projective algebraic geometry:
the multiplication and addition of complex numbers is 
the projectivization of the $Z$- and $W$-spiders.

But what about the $X$-spider?
Taking the function 
$(w,z) \mapsto \frac{wz+1}{w+z}$
and switching to the reciprocal chart $z\mapsto 1/z$
of the Riemann sphere, we find 
$$
(w,z) \mapsto \frac{w+z}{wz+1}
$$
which is the familiar formula for Lorentz velocity addition.

From \cite{Coecke2011-1}, page 28,
we see the $W$-spider is expressed 
in $ZX$-calculus as
$$
\igc{w_ww}
=
\igc{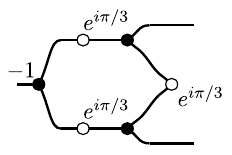}.
$$

With these tools we can now express such things
as Mandelbrot iteration $z\mapsto z^2 + c$ which belong in 
the field of ``complex dynamics'', see~\cite{Zvonkin2008}\S 6.



\section{Application to stabilizer codes}


The $n$-qubit \emph{Pauli group} denoted $\Pauli_n$ is generated
by $n$-fold tensor products $\tensor_{i=1}^{n} g_i$ with $g_i$ an
element of the (1-qubit) Pauli group.
We also write indexed Pauli group elements such as $X_i$ to mean 
a tensor product $I\tensor ... X ...\tensor I $ with $X$ in the $i$'th 
component, etc.
An \emph{$X$-type} Pauli operator is a tensor product of $I,X\in \Pauli_1$.
Similarly for \emph{$Z$-type} Pauli operators.
The $n$-qubit \emph{Clifford group} denoted $\Clifford_n$ 
is the normalizer of $\Pauli_n$ in the unitary group $U(2^n)$.
It acts on $\Pauli_n$ by conjugation.
An element $g\in\Pauli_n$ has \emph{weight} $w(g)$ which 
counts how many of the $n$ components 
in the $n$-fold tensor product $g$
are not proportional to the identity.

A (quantum stabilizer) \emph{code} \cite{Gottesman1997}
on $n$-qubits, denoted by $C$, is specified 
by an abelian subgroup $S\subset\Pauli_n$ such that $-I\notin S$.
The \emph{logical subspace} of $C$ is the linear subspace 
$\{v\in(\Complex^2)^{\tensor n} | gv=v \text{ for all } g\in S\}.$
Given independent stabilizer generators $S = \langle g_i \rangle_{i=1}^{m}$,
we find $k$ pairs of anti-commuting independent
\emph{logical operators} $\{ L_{X_i}, L_{Z_i} \}_{i=1}^{k}$,
that together with $S$ generate the centralizer 
$C_{\Pauli_n}(S)$.
An \emph{encoder} 
is a unitary $E\in \Clifford_n$ such that
$E^\dag Z_i E = g_i$ for $i=1,..,m$ and
$E^\dag X_{i+m} E = L_{X_i},
E^\dag Z_{i+m} E = L_{Z_i}$ for $i=1,...,k$.
The \emph{logical encoder} is the linear map 
$L = E(\ket{Z}^{\tensor m}\tensor I^{\tensor k})$.
The \emph{logical decoder} is $L^\dag$.
The projector onto the logical subspace is then 
$\Pi=LL^\dag $. 
The \emph{distance} of a code $d$ is the minimum weight of
a logical operator up to multiplication by a stabilizer.
We will take the label $C$ to denote any or all of this 
data as needed;
the notation $[[n,k,d]]$ is also used to refer to these parameters.


\begin{proposition}
Given a code $C$ with parameters $[[n,1,d]]$ and logical decoder $L^\dag$,
the function $f:\bigtimes_n\CProj^{1} \to \CProj^1$ given by
$f=\Proj_\bot(L^\dag)\Sigma_{2,...,2}$ is a 
meromorphic function of $n$ variables on its domain of definition.
\end{proposition}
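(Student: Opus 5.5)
This is a direct specialization of Proposition~\ref{mero}, so the plan is to reduce to it. The only thing to check is that the logical decoder $L^\dag$ is a linear map of the right shape. By definition, the logical encoder is $L = E(\ket{Z}^{\tensor m}\tensor I^{\tensor k})$, where $E\in\Clifford_n$ is a unitary on $(\Complex^2)^{\tensor n}$. For a code with $k=1$ we have $m=n-1$ independent stabilizer generators. The factor $\ket{Z}^{\tensor (n-1)}\tensor I$ is then a linear map $\Complex^2\to(\Complex^2)^{\tensor n}$, and composing with $E$ keeps it linear. Hence $L:\Complex^2\to(\Complex^2)^{\tensor n}$ is linear, and its adjoint $L^\dag:(\Complex^2)^{\tensor n}\to\Complex^2$ is linear as well. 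This is exactly the type of operator $T$ in Proposition~\ref{mero}.

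Next I would apply Proposition~\ref{mero} with $T=L^\dag$. Since $\Proj_\bot$ is a functor (and lax monoidal by Proposition~\ref{lax}), $f=\Proj_\bot(L^\dag)\Sigma_{2,\dots,2}$ is a well-defined pointed function. In the affine chart, the Segre embedding sends $([z_1:1],\dots,[z_n:1])$ to a vector whose entries are the $2^n$ multilinear monomials $\prod_{i\in A}z_i$. Applying the $2\times 2^n$ matrix of $L^\dag$ gives $[p:q]$, where $p$ and $q$ are multilinear polynomials whose coefficients are the entries of the two rows of $L^\dag$. After cancelling common factors, $f=p/q$ wherever $(p,q)\neq(0,0)$, and $f=\bot$ where both vanish. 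Points with some $z_i=\infty$ are treated by switching to the reciprocal chart $z_i\mapsto 1/z_i$ and clearing denominators, exactly as in the proof of Proposition~\ref{mero}.

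The one point I expect to need care is nondegeneracy: we must have $q\neq 0$ as a polynomial (or else $p\neq0$, in which case we pass to the chart at $\infty$), so that $f$ is not identically $\bot$. Here I would use that $L$ is an isometry, since $L^\dag L = I$ because $E$ is unitary and $\bra{Z}\ket{Z}=1$. So $L^\dag$ has rank $2$, and its two rows are linearly independent vectors in $(\Complex^2)^{\tensor n}$. Product vectors span $(\Complex^2)^{\tensor n}$, so neither row annihilates every Segre image. Therefore neither $p$ nor $q$ is the zero polynomial, and the ratio $p/q$ is a genuine element of $\Complex(z_1,\dots,z_n)$. With this established, the proposition is exactly Proposition~\ref{mero} applied to $T=L^\dag$.
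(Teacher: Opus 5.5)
Your proposal is correct and follows the paper's proof, which simply applies Proposition~\ref{mero} to the linear operator $T=L^\dag:(\Complex^2)^{\tensor n}\to\Complex^2$. The extra check you add is sound and goes beyond what the paper states explicitly: since $L$ is an isometry, both rows of $L^\dag$ are nonzero, so $p,q\neq 0$ as polynomials and $f$ is a genuine element of $\Complex(z_1,\dots,z_n)$.
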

\begin{proof}
Apply proposition ~\ref{mero} to the linear operator $L^\dag$.
\end{proof}

For simplicity of exposition, we frame the following results in terms of 
a univariate meromorphic function $f(z)=f(z,...,z)$.
We call such a function $f(z)$ the \emph{meromorphic decoder} of $C$.
We say that a meromorphic decoder $f(z)$ \emph{weakly distills}
a point $z\in \CHat$ if $z$ is a stationary point of $f(z)$.
The degree of such a stationary point is $m_z(f)$ and 
we call this \emph{coherent error suppression} to order
$O(\varepsilon^{m_z(f)})$.

\begin{theorem}\label{WD}
Given a meromorphic decoder of degree $n$ 
that weakly distill points $D\subset \CHat$ then
$$
    2(n-1) = \sum_{z\in D} (m_z(f)-1).
$$
\end{theorem}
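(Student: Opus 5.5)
This is a direct specialization of the baby Riemann--Hurwitz formula, Theorem~\ref{RH}, to the meromorphic decoder $f(z)=f(z,\dots,z)$. The plan is to check that $f$ is a genuine element of $\Complex(z)$ of degree $n$, and that the set $D$ of weakly distilled points coincides with the branch set $B$ of $f$. The identity then reads off immediately.

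\textbf{Step 1: $f$ is a univariate rational function.} By the preceding proposition, $f(z_1,\dots,z_n)=p(z_1,\dots,z_n)/q(z_1,\dots,z_n)$ on its domain of definition. Restricting to the diagonal $z_i=z$ gives a quotient of univariate polynomials. After cancelling common factors, this is $p(z)/q(z)\in\Complex(z)$, provided $q\neq 0$ on the diagonal. The hypothesis that $f$ is a meromorphic decoder of degree $n$ presupposes that this restriction is nonconstant and well defined, so $f\in\Complex(z)$ of degree $n$.

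\textbf{Step 2: $D=B$.} By definition, $f$ weakly distills $z$ exactly when $z$ is a stationary point of $f$, i.e.\ $f'(z)=0$, computed in the chart $z\mapsto 1/z$ when $z=\infty$ or $f(z)=\infty$. This is precisely the definition of a branch point. So $D$ is the full set of branch points $B$, and the multiplicities $m_z(f)$ agree.

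\textbf{Step 3: apply Theorem~\ref{RH}.} With degree $d=n$, Theorem~\ref{RH} gives
\[
\sum_{z\in D}(m_z(f)-1)=2(n-1).
\]

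The main point of care is the reading of the hypotheses. The identity only holds if $D$ is taken to be \emph{all} stationary points of $f$ on $\CHat$, including those at $\infty$ or with value $\infty$; summing over a proper subset would give only an inequality $\le$. I would therefore state explicitly that $D$ denotes the complete set of weakly distilled points. I would also note that cancelling common factors in Step~1 is what makes the degree well defined. Note that the degree $n$ of the decoder need not equal the number of physical qubits, since cancellation can lower it. The theorem is phrased in terms of the degree, so no extra argument is needed there.
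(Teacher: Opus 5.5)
Your proposal is correct and follows the paper's route. The paper's proof is the one-line remark that this is a restatement of Theorem~\ref{RH}, with $D$ as the branch set $B$ and $d=n$; your extra care (taking $D$ to be all stationary points including those at $\infty$ or at poles, and using the degree after cancellation on the diagonal) makes explicit hypotheses the paper leaves implicit.
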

\begin{proof}
This is a restatement of Thm.~\ref{RH}.
\end{proof}

We say a meromorphic decoder $f(z)$ 
\emph{distills} a point $z\in\CHat$ when $f(z)=z$,
and \emph{distills up to a Clifford} when 
$f(z)=g(z)$ with $g\in\Clifford_1$.
Furthermore, $f(z)$ 
\emph{coherently distills} a point $z\in\CHat$ when 
$f'(z)=0$ and also $f(z)=z$, and 
\emph{coherently distills up to a Clifford} if $f'(z)=0$
and $f(z)=g(z)$ with $g\in\Clifford_1$.

Given a code $C$ with encoder $E$,
the \emph{dual code} $C^\top$ has
encoder $H^{\tensor n}E$.


\begin{proposition}\label{dual}
Given $k=1$ code $C$ with meromorphic decoder $f$,
and dual code $C^{\top}$ with meromorphic decoder $f^{\top}$, we have
$$
f(z) = \frac{f^\top(\frac{z+1}{z-1})+1}{f^\top(\frac{z+1}{z-1})-1}.
$$
\end{proposition}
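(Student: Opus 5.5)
The plan is to compute $f$ directly from the definitions. The key facts are that $f$ is obtained by applying $\Proj_\bot(L^\dag)$ to the Segre image of $(z,\dots,z)$, and that $H$ acts on $\CHat$ by $H(z)=\frac{z+1}{z-1}$. Everything then follows from functoriality of $\Proj_\bot$ together with Proposition~\ref{lax}.

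First I determine the logical decoder of $C^\top$. Its encoder is $E^\top = H^{\tensor n}E$, so its logical encoder is
$$L^\top = H^{\tensor n}E(\ket{Z}^{\tensor m}\tensor I) = H^{\tensor n}L,$$
and hence $(L^\top)^\dag = L^\dag H^{\tensor n}$ (since $H$ is Hermitian). The catch is that $E^\top$ must satisfy the encoder conventions. Conjugating by $H^{\tensor n}$ sends the stabilizers $g_i$ to their $X\leftrightarrow Z$ swapped versions, which is the dual code. It also swaps the logical operators, $L_X \leftrightarrow H^{\tensor n} L_Z H^{\tensor n}$. So in the logical frame of $C^\top$ the logical qubit is rotated by an $H$.

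Reading the definition literally, $L^\top = H^{\tensor n}L$, so the decoder is $(L^\top)^\dag = L^\dag H^{\tensor n}$ with no logical $H$ inserted. Applying $\Proj_\bot$ and the Segre embedding, functoriality and the lax structure (naturality of $\Sigma$) give
$$\Proj_\bot(H^{\tensor n})\,\Sigma(z,\dots,z) = \Sigma\bigl(H(z),\dots,H(z)\bigr).$$
Therefore $f^\top(z) = f(H(z))$.

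Since $H$ is an involution on $\CHat$, with $H(H(z))=z$, this gives $f(z)=f^\top(H(z))$. The stated formula has an extra outer $H$, namely $f(z)=H\bigl(f^\top(H(z))\bigr)$. This comes from the logical $H$ identified above. Because the paper's convention lets the output $\Complex^2$ factor of $E^\top$ carry $X_{m+1}\mapsto H L_Z H$, I would insert $H$ on the logical leg, giving $(L^\top)^\dag = H L^\dag H^{\tensor n}$. Applying $\Proj_\bot$ then yields
$$f^\top(z) = H\bigl(f(H(z))\bigr),$$
and inverting with $H\circ H=\mathrm{id}$ gives exactly
$$f(z) = \frac{f^\top(\frac{z+1}{z-1})+1}{f^\top(\frac{z+1}{z-1})-1}.$$

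The main obstacle is this bookkeeping of the logical frame: justifying that the dual code's decoder carries an $H$ on the logical output, and fixing the order of the $n$-fold $H$ on the inputs. I will handle it by checking that $H^{\tensor n} E H_{m+1}$ satisfies the encoder conditions for $C^\top$ with logicals $L_X^\top = H^{\tensor n}L_ZH^{\tensor n}$ and $L_Z^\top = H^{\tensor n}L_XH^{\tensor n}$. I will also use the fact that encoders are only defined up to such logical Cliffords, and that these act on $f$ by post-composition with $\Proj(g)$.

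The remaining steps are routine. The $\bot$ cases agree, since $H$ is invertible and so $\Proj_\bot(H)$ maps $\bot$ only to $\bot$. The Möbius formula for $H$ was already recorded in the introduction.
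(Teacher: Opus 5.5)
Your proposal is correct and follows the paper's one-line proof: conjugate the logical decoder by $H$, so that $C^\top$ has decoder $H L^\dag H^{\tensor n}$, then apply functoriality of $\Proj_\bot$ (with naturality of $\Sigma$) and use $H\circ H=\mathrm{id}$. Your extra bookkeeping is a real gain. With the paper's literal dual encoder $H^{\tensor n}E$ one only gets $f^\top=f\circ H$, so the stated formula needs the Hadamard-rotated logical frame $H^{\tensor n}EH_{m+1}$ (logicals $X^{\tensor n},Z^{\tensor n}$ for both codes, as used in Cor.~\ref{macw}), which the paper's phrase ``conjugate with $H$'' tacitly assumes.
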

\begin{proof}
Conjugate the logical decoder $L^\dag$
with $H$ and use functoriality of $\Proj$.
\end{proof}

A CSS code $C$ has stabilizer group $S = S_X S_Z$ where
$S_Z$ has only $Z$-type Pauli operators, and $S_X$ has only $X$-type. 
Given a set of operators $A\subset \Pauli_n$, we define the 
\emph{weight enumerator} of $A$ to be the 
following homogeneous polynomial in $\Complex[x,y]$:
$$
    W_A(x,y) = \sum_{g\in A} x^{n-w(g)} y^{w(g)}. 
$$

\begin{theorem}\label{wenum}
Given CSS code $C$ with stabilizers $S_X,S_Z$ and logical operators 
$X^{\tensor n}, Z^{\tensor n}$
the meromorphic decoder of $C$ is given by
$$
f(z) = \frac{W_{\langle S_X\rangle}(z,1)}{W_{\langle S_X\rangle}(1,z)}
$$
\end{theorem}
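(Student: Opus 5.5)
The plan is to compute the logical decoder $L^\dag$ explicitly in the computational basis and then evaluate it on the diagonal product state $[z:1]^{\tensor n}$. By Proposition~\ref{mero} this gives $f(z)=p(z)/q(z)$ with $p(z) = \bra{0}L^\dag\,(z\ket{0}+\ket{1})^{\tensor n}$ and $q(z) = \bra{1}L^\dag\,(z\ket{0}+\ket{1})^{\tensor n}$. Here $\ket{0}=(1,0)^\top$ is the point $\infty$ and $\ket{1}=(0,1)^\top$ is the point $0$, matching the chart $[z:1]$. Since $L^\dag = (\bra{+Z}^{\tensor m}\tensor I)E^\dag$, we have $\bra{j}L^\dag = \bigl(E(\ket{+Z}^{\tensor m}\tensor\ket{j})\bigr)^\dag =: \bra{j_L}$. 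So everything reduces to identifying the logical basis states $\ket{0_L},\ket{1_L}$ as explicit vectors, and then computing two inner products.

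\emph{Step 1: identify $\ket{0_L}$.} By the defining relations of $E$, the vector $\ket{0_L}=E\ket{0\cdots 0}$ is a $+1$ eigenvector of every generator $g_i$ and of $L_{Z}=Z^{\tensor n}$. The joint $+1$ eigenspace of $S$ and $Z^{\tensor n}$ is one-dimensional, so it suffices to exhibit one vector in it. I would take
$$\ket{0_L}\propto\sum_{s\in\langle S_X\rangle}\ket{s},$$
viewing each $X$-type stabilizer as a binary vector $s\in\Field_2^n$. Invariance under $S_X$ is immediate, since the sum runs over a group. Invariance under $S_Z$ holds because $Z$-type stabilizers commute with $X$-type ones, so $\ket{0\cdots0}$ and all its $S_X$-translates lie in their $+1$ eigenspace. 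Invariance under $Z^{\tensor n}$ holds because every $s$ commutes with $Z^{\tensor n}$ and therefore has even weight.

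\emph{Step 2: identify $\ket{1_L}$.} From $E^\dag X^{\tensor n} E = X_{m+1}$ we get $\ket{1_L}=E X_{m+1}\ket{0\cdots0}=X^{\tensor n}\ket{0_L}$, with no phase ambiguity. Hence $\ket{1_L}\propto\sum_s\ket{s+1^n}$, with the same scalar as in Step 1; this is the point that needs care. Since both states carry the same overall scalar, that scalar cancels projectively.

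\emph{Step 3: compute $p$ and $q$.} The coefficients of $\ket{s}$ in $(z\ket{0}+\ket{1})^{\tensor n}$ are real polynomials, so no complex conjugation issues arise, and the coefficient is $z^{\,n-w(s)}$. Summing over $\langle S_X\rangle$ gives $p(z)=W_{\langle S_X\rangle}(z,1)$. For $q$ the relevant vector is $s+1^n$, whose weight is $n-w(s)$, so $q(z)=\sum_s z^{w(s)}=W_{\langle S_X\rangle}(1,z)$. Finally, the definition of a meromorphic function cancels any common factor of $p$ and $q$; this does not change the rational function $p/q$, so the formula holds as stated.

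The main obstacle I expect is bookkeeping of conventions rather than mathematics. One point is matching $\ket{+Z}$ with $\ket{0}$ and $[z:1]$ with $z\ket{0}+\ket{1}$, consistent with the table in Proposition~\ref{zx}, where $(1,0)^\top$ is the point $\infty$. Getting this wrong would swap numerator and denominator. The other point is checking that the relative phase between $\ket{0_L}$ and $\ket{1_L}$ is exactly fixed by the encoder relations, which Step 2 handles.
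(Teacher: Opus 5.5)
Your proof is correct and has the same skeleton as the paper's argument in App.~\ref{appwenum}. Both identify the two components of $L^\dag$ applied to the product state as the weight enumerators of $\langle S_X\rangle$ and of its coset $X^{\tensor n}\langle S_X\rangle$. Both then use $L_X=X^{\tensor n}$, i.e.\ $w(s+1^n)=n-w(s)$, to swap the arguments.

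The difference is in how you obtain the key input, the coset-state form of the logical basis. The paper writes $L^\dag$ as a phase-free Z-X normal form diagram, bends wires, and cites Eq.~(5) of \cite{Kissinger2022}. You instead derive $\ket{0_L}\propto\sum_{s\in\langle S_X\rangle}\ket{s}$ directly from the stabilizer conditions and the one-dimensionality of the joint $+1$ eigenspace of $S$ and $Z^{\tensor n}$. You then fix $\ket{1_L}=X^{\tensor n}\ket{0_L}$ exactly from the encoder relation.

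Your route is elementary and self-contained. It also makes explicit why the decoder built from $E$, and not merely some normal-form decoder, gives both components with the same scalar; the paper leaves this point implicit. The paper's route stays inside the diagrammatic calculus it is promoting and is phrased homogeneously in $(x,y)$, which it extends to $k>1$ and full weight enumerators. Your argument extends equally easily.

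Two small remarks:
\begin{itemize}
\item You tacitly use the convention $EZ_iE^\dag=g_i$, $EX_{m+1}E^\dag=X^{\tensor n}$. The paper's stated relations $E^\dag Z_iE=g_i$ do not in general make $L=E(\ket{Z}^{\tensor m}\tensor I)$ an encoder, so your reading is the consistent one.
\item The reason no conjugation issue arises is not that the coefficients of $(z\ket{0}+\ket{1})^{\tensor n}$ are real, since those are never conjugated. It is that every coefficient of $\ket{0_L}$ and $\ket{1_L}$ equals the same constant $c$, whose conjugate $\bar c$ cancels in $p/q$.
\end{itemize}
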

\begin{proof}See App.~\ref{appwenum}.\end{proof}


Note that for CSS codes
we can always recover the $Z$-type stabilizers from the $X$-type
stabilizers and $X$-type logical operators, and vice-versa.

\begin{corollary}\label{macw}
Given CSS code $C$ with stabilizers $S_X,S_Z$ and logical operators 
$X^{\tensor n}, Z^{\tensor n}$
we have
$$
f(z)
=
\frac{W_{\langle S_Z\rangle}(\frac{z+1}{z-1},1) + W_{\langle S_Z\rangle}(1,\frac{z+1}{z-1})}
{W_{\langle S_Z\rangle}(1,\frac{z+1}{z-1}) - W_{\langle S_Z\rangle}(1,\frac{z+1}{z-1})}.
$$
\end{corollary}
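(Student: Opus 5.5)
The plan is to deduce the corollary from Thm.~\ref{wenum} applied to the dual code $C^\top$, and then transport the result back to $C$ using Prop.~\ref{dual}. No MacWilliams-type identity is needed. The Hadamard conjugation in Prop.~\ref{dual} already plays the role of the transform $(x,y)\mapsto(x+y,x-y)$.

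\textbf{Step 1: $C^\top$ satisfies the hypotheses of Thm.~\ref{wenum}.} The dual code has encoder $H^{\tensor n}E$, and conjugation by $H^{\tensor n}$ swaps $X$-type and $Z$-type Pauli operators. So $C^\top$ is again CSS, with $X$-type stabilizers $S_Z$ and $Z$-type stabilizers $S_X$ (as subsets of $\Pauli_n$, identifying $X$-strings with $Z$-strings on the same support). Its logical operators are $Z^{\tensor n}$ and $X^{\tensor n}$ with their roles exchanged, which is still of the form required by Thm.~\ref{wenum}. Weights are preserved under this swap, so Thm.~\ref{wenum} gives
$$f^\top(z)=\frac{W_{\langle S_Z\rangle}(z,1)}{W_{\langle S_Z\rangle}(1,z)}.$$
I need to check here that the identification of logical $X$ and $Z$ under the swap matches the convention of the encoder (logical $X$ on qubit $m+1$). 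Since the Hadamard on the logical qubit is exactly what Prop.~\ref{dual} accounts for, the two swaps should be consistent. I expect this bookkeeping to be the main point of care.

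\textbf{Step 2: substitute into Prop.~\ref{dual}.} Write $u=\frac{z+1}{z-1}=H(z)$. Then $f(z)=\frac{f^\top(u)+1}{f^\top(u)-1}$, and clearing the common denominator $W_{\langle S_Z\rangle}(1,u)$ gives
$$f(z)=\frac{W_{\langle S_Z\rangle}(u,1)+W_{\langle S_Z\rangle}(1,u)}{W_{\langle S_Z\rangle}(u,1)-W_{\langle S_Z\rangle}(1,u)}.$$
The denominator as printed in the statement, $W(1,u)-W(1,u)$, is identically zero and must be a typo for $W(u,1)-W(1,u)$. I would state the corrected form explicitly.

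\textbf{Step 3: justify the equality as one of rational functions.} The identity should be read in $\Complex(z)$, that is, after cancelling common factors. The cleanest way is to multiply through by $(z-1)^n$ and use homogeneity of degree $n$: $(z-1)^nW(u,1)=W(z+1,z-1)$. This yields the polynomial form
$$f(z)=\frac{W_{\langle S_Z\rangle}(z+1,z-1)+W_{\langle S_Z\rangle}(z-1,z+1)}{W_{\langle S_Z\rangle}(z+1,z-1)-W_{\langle S_Z\rangle}(z-1,z+1)}.$$
This form also handles the points $z=1$ (where $u=\infty$) and the points where $f^\top(u)=\pm1$ uniformly. Agreement on the domain of definition then follows from functoriality of $\Proj_\bot$, exactly as in the proof of Prop.~\ref{dual}.
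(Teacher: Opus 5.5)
This is the paper's argument: apply Thm.~\ref{wenum} to $C^\top$, whose $X$-type stabilizers are $S_Z$, then substitute into Prop.~\ref{dual}. You are also right that the printed denominator is a typo for $W_{\langle S_Z\rangle}(\frac{z+1}{z-1},1)-W_{\langle S_Z\rangle}(1,\frac{z+1}{z-1})$. On the bookkeeping you flag: Prop.~\ref{dual} in the form $f=H\circ f^\top\circ H$ presupposes that the logical $X$ of $C^\top$ is $H^{\tensor n}L_Z H^{\tensor n}=X^{\tensor n}$, so the roles are not exchanged and Thm.~\ref{wenum} applies to $C^\top$ verbatim. With the literal encoder $H^{\tensor n}E$ instead, your formula for $f^\top$ and Prop.~\ref{dual} would each be off by the same outer $H$, which cancels.
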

\begin{proof}
Combine the previous proposition and theorem.
\end{proof}

Prop.~\ref{dual} and/or Cor.~\ref{macw} can be seen as a kind of
MacWilliams theorem for meromorphic decoders~\cite{Macwilliams1977}.

Given a code $C$ with $k=1$,
and another code $D$ on $n$ qubits we can use the
logical operators of $n$ copies of $C$ to express the 
stabilizers of $D$. The resulting code is the \emph{concatenated}
code $C\tensor D$.

\begin{proposition}\label{cat}
Given two codes $C$ and $D$ both with $k=1$, and meromorphic decoders 
$f$ and $g$ respectively,
the meromorphic decoder of the concatenated code $C\tensor D$ is $f\circ g$.
\end{proposition}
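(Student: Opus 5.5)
The plan is to identify a logical decoder for the concatenated code $C\tensor D$ with a composite of linear maps, and then use functoriality and lax monoidality of $\Proj_\bot$ (Prop.~\ref{lax}).

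First fix the decoders. Let $C$ have $m$ physical qubits and logical decoder $L_C^\dag:(\Complex^2)^{\tensor m}\to\Complex^2$. Let $D$ have $n$ qubits and logical decoder $L_D^\dag:(\Complex^2)^{\tensor n}\to\Complex^2$. The concatenated code $C\tensor D$ lives on $nm$ qubits. Its stabilizers are the stabilizers of the $n$ copies of $C$, together with the stabilizers of $D$ written in terms of the logical operators of $C$. Its logical operators are those of $D$ written in the same way.

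From this description, the encoder can be taken to be $E_{C\tensor D} = E_C^{\tensor n}\,(E_D\tensor I)$, up to the obvious reordering of ancilla wires. This gives the logical encoder
$$L_{C\tensor D} = L_C^{\tensor n} L_D,$$
and hence the logical decoder $L_{C\tensor D}^\dag = L_D^\dag (L_C^\dag)^{\tensor n}$. To justify this, check that the conjugation conditions $E^\dag Z_i E = g_i$ and $E^\dag X_{\text{log}}E = L_X$ hold for this product encoder. This is a routine stabilizer computation, since $E_C$ maps single-qubit Paulis on its logical wire to the logical Paulis of $C$.

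Next, projectivize. Write $\Sigma$ for the $nm$-fold Segre embedding. It factors as $\Sigma_{2,\dots,2}$ over $n$ blocks, composed with $n$ copies of the $m$-fold Segre maps. By Prop.~\ref{lax}, naturality of the laxator gives
$$\Proj_\bot\bigl((L_C^\dag)^{\tensor n}\bigr)\,\Sigma_{(2^m),\dots,(2^m)} = \Sigma_{2,\dots,2}\,\bigl(\Proj_\bot(L_C^\dag)\,\Sigma_{2,\dots,2}\bigr)^{\wedge n}.$$
Applying functoriality to the outer $L_D^\dag$, the multivariate decoder of $C\tensor D$ becomes the multivariate decoder of $D$ evaluated at $n$ copies of the multivariate decoder of $C$, each applied to its own block of $m$ variables. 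Setting all $nm$ variables equal to $z$ then gives $g(f(z),\dots,f(z))$. This is $(g\circ f)(z)$ in the paper's convention that the outer code's decoder is applied last, so "$f\circ g$" should be read with $f$ and $g$ in the order fixed by the concatenation convention. I would state the order explicitly and match it to the statement.

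I expect two obstacles.

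The first is making the encoder identification precise, in particular the wire ordering and showing that the composite is genuinely an encoder satisfying all the conditions. A cleaner route is to avoid encoders altogether. One can argue that any two logical decoders differ only by the choice of logical Pauli basis and phases. Then it suffices that $L_D^\dag (L_C^\dag)^{\tensor n}$ maps the code space isometrically onto $\Complex^2$ and intertwines the logical Paulis.

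The second is the domain of definition. Naturality holds as an equality of pointed functions, but the univariate rational function should be interpreted after cancelling common factors. Since rational functions compose in $\Complex(z)$, it is enough for the identity to hold on the dense set where everything is defined, and equality as meromorphic functions then follows.
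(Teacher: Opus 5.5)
Your argument is correct and is essentially the paper's proof: the logical decoder of the concatenation is $L_D^\dag(L_C^\dag)^{\otimes n}$, and you then apply $\Proj_\bot$. You make explicit the naturality of the Segre laxator needed to pass $(L_C^\dag)^{\otimes n}$ through $\Sigma$, which the paper folds into ``functoriality'' (your displayed identity should also precompose the left side with the $m$-fold maps $(\Sigma_{2,\dots,2})^{\wedge n}$ so the domains match). Your caution about the order is justified: under the paper's written definition, where copies of $C$ carry the stabilizers of $D$, the decoder is $g\circ f$; the stated $f\circ g$ instead matches the Shor-code example, where the first factor $\langle XXI, IXX\rangle$ is the outer code and its decoder is applied last.
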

\begin{proof}
The logical decoders of $C$ and $D$ compose
to give the logical decoder of $C\tensor D$,
so the proposition follows from functoriality of $\Proj$.
\end{proof}

We will also define the \emph{$r$-polynomial} for a meromorphic decoder 
$f(z) = \frac{p(z)}{q(z)}$ as $r(z) = p(z) - zq(z).$
This gives an algebraic characterization of the fixed points of a distillation
routine.

\begin{proposition}\label{fixed}
The roots of $r(z)=0$ correspond bijectively to distillation points $z\in\Complex$.
\end{proposition}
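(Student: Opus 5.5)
The plan is to unwind the definitions and use that $p$ and $q$ are coprime. By definition, $f$ distills a point $z\in\Complex$ exactly when $f(z)=z$. Here $f=p/q$ with $p,q\in\Complex[z]$ having no common roots, as in the definition of meromorphic function, and $r(z)=p(z)-zq(z)$. I will show that for $z_0\in\Complex$ we have $f(z_0)=z_0$ if and only if $r(z_0)=0$. Since the distillation points considered lie in $\Complex$, the correspondence $z_0\mapsto z_0$ is then the required bijection.

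\emph{Forward direction.} Suppose $f(z_0)=z_0$ with $z_0$ finite. Then $q(z_0)\neq 0$: otherwise coprimality gives $p(z_0)\neq0$, so $f(z_0)=\infty\neq z_0$. Hence $p(z_0)/q(z_0)=z_0$. Multiplying through by $q(z_0)$ gives $p(z_0)-z_0q(z_0)=0$, that is, $r(z_0)=0$.

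\emph{Converse.} Suppose $r(z_0)=0$. If $q(z_0)=0$, then $p(z_0)=r(z_0)+z_0q(z_0)=0$, so $z_0$ is a common root of $p$ and $q$, contradicting coprimality. So $q(z_0)\neq 0$, and dividing gives $f(z_0)=p(z_0)/q(z_0)=z_0$. This coprimality step is the only nontrivial point, and it is exactly where the convention in Prop.~\ref{mero}, that common zeros have been eliminated, is used.

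Finally, the point $\infty$ is excluded by the statement's restriction to $z\in\Complex$, and I would note this restriction explicitly. A finite polynomial $r$ cannot detect whether $f(\infty)=\infty$; that condition corresponds instead to $\deg p>\deg q$, which shows up as a drop in the degree of $r$. I would also remark that if $r\equiv 0$ then $f(z)=z$, every point is fixed, and the bijection still holds trivially.
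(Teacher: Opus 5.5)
Your proof is correct. The paper states Prop.~\ref{fixed} with no proof at all, treating it as immediate from $r(z)=p(z)-zq(z)$. Your check that $f(z_0)=z_0\iff r(z_0)=0$ for finite $z_0$ supplies exactly what is taken for granted.

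You also rightly isolate coprimality of $p$ and $q$ as the only real content. Without it, a common zero $z_0$ of unreduced $p,q$ would be a spurious root of $r$. Such a $z_0$ is a point where $L^\dag$ annihilates $(z_0,1)^{\tensor n}$, so the pointed map returns $\bot$ there.

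One small correction concerns where that coprimality comes from. It should come from the definition of a univariate meromorphic function in lowest terms, not from the convention in Prop.~\ref{mero}. The meromorphic decoder $f(z)=f(z,\dots,z)$ is a diagonal restriction, and coprimality in $\Complex[z_1,\dots,z_n]$ need not survive it: $z_1/z_2$ becomes $z/z$. So one must cancel again before forming $r$.

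Your remark about $\infty$ is also right: $f(\infty)=\infty$ exactly when $\deg r\le\deg f$ rather than $\deg r=\deg f+1$.
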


\todo{define coherent error suppression  $O(\varepsilon^d)$ }

\begin{conjecture}
Any CSS code $[[n,1,d]]$ with logical operators
$X^{\tensor n}, Z^{\tensor n}$
exhibits coherent error suppression $O(\varepsilon^d)$
at the four stabilizer states $z=0,\infty,\pm 1$.
\end{conjecture}
This conjecture says that the meromorphic decoder $f(z)$ for
these codes will have order $d-1$ zeros of $f'(z)$ at $z=0,\infty,\pm 1.$
We will see this in examples below as the factor $z^{d-1}(z-1)^{d-1}(z+1)^{d-1}$
appearing in the numerator of the expression for $f'(z)$.

\todo{use Prop.~\ref{E7}}



\section{Examples}

Many of the following calculations were performed using 
sage python \cite{Sagemath}, which has powerful algorithms for polynomial
factorization and root-finding, see App.~\ref{appsage}.

In the following examples we suppress the tensor
product symbol $\tensor$ when notating elements of $\Pauli_n$,
as well as writing a dot $.$ for $I\in\Pauli_1$.

%

\subsection{Logical state preparation}

Given a code $C$,
by applying the code projector $\Pi=LL^\dag$ to $n$
physical qubits we find the encoded state to be given
by the meromorphic decoder. 

%

\begin{example}
Here we use logical decoder circuits from the 
\emph{phase-free} ZX-calculus, see \cite{Kissinger2022}.
The stabilizer code  $\langle XXI,IXX\rangle$
has logical decoder circuit  
$
L^\dag = \igc{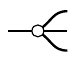}
$
from which we derive the meromorphic decoder via Prop.~\ref{zx}:
$
g(z_1,z_2,z_3) = z_1 z_2 z_3.
$
For the other stabilizer code  $\langle ZZI, IZZ\rangle$,
the logical decoder circuit is 
$$
\igc{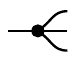}
=\igc{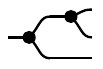}
$$
Once again using Prop.~\ref{zx} we find the meromorphic decoder,
\begin{align*}
h(z_1,z_2,z_3) =& \frac{1+z_1z_2}{z_1+z_2}
\ \ \mathrm{with}\ \ 
z_2 \mapsto \frac{1+z_2z_3}{z_2+z_3}\\
= & 
 \frac{1 + z_1 \frac{1+z_2z_3}{z_2+z_3} }
{z_1 + \frac{1+z_2z_3}{z_2+z_3} }\\
= & \frac{z_1 + z_2 + z_3+ z_1z_2z_3 }
{1 + z_1z_2 + z_1 z_3+ z_2z_3 } \\
\end{align*}
Concatenating these two codes together produces the 
$[[9,1,3]]$ Shor code:
\begin{center}
\begin{tabular}{|c|c|}
\hline
$S$ &  {\tt XX.} \\
  &  {\tt .XX} \\
\hline
$L_X$  &  {\tt XXX} \\
$L_Z$  &  {\tt ZZZ} \\
\hline
\end{tabular}
\ \ $\tensor $ \ \ 
\begin{tabular}{|c|c|}
\hline
$S$ &  {\tt ZZ.} \\
  &  {\tt .ZZ} \\
\hline
$L_X$  &  {\tt XXX} \\
$L_Z$  &  {\tt ZZZ} \\
\hline
\end{tabular}
\ \ $=$ \ \ 
\begin{tabular}{|c|c|}
\hline
$S$  &  {\tt ZZ.......} \\
  &  {\tt ...ZZ....} \\
  &  {\tt ......ZZ.} \\
  &  {\tt .ZZ......} \\
  &  {\tt ....ZZ...} \\
  &  {\tt .......ZZ} \\
  &  {\tt XXXXXX...} \\
  &  {\tt ...XXXXXX} \\
\hline
$L_X$  &  {\tt XXXXXXXXX} \\
$L_Z$  &  {\tt ZZZZZZZZZ} \\
\hline
\end{tabular}
\end{center}
For simplicity we write the univariate meromorphic decoders of 
the concatenating codes as 
$g(z)=\frac{z^3 + 3z}{3z^2 + 1}$ and
$h(z)=z^3$, respectively.
Using Prop.~\ref{cat}, the Shor code has meromorphic decoder, 
$f(z)$, $r$-polynomial, and derivative $f'(z)$:
\begin{align*}
f(z) & = g(h(z)) = \frac{z^{9} + 3 z^{3}}{3 z^{6} + 1} \\
r(z) &= (z - 1) z (z + 1) (z^{2} - z - 1) (z^{2} + 1) (z^{2} + z - 1) \\
f'(z) &= \frac{9 (z - 1)^{2} z^{2} (z + 1)^{2} (z^{2} - z + 1)^{2} (z^{2} + z + 1)^{2}}{(3z^{6} + 1)^2}
\end{align*}
The $r$-polynomial has roots at $z=0,\pm1,\pm i$
and so these stabilizer states are distilled by Prop.~\ref{fixed}.
Of these, $z=0,\pm 1$ are stationary points of degree three,
giving coherent distillation at $z=0,\pm 1$ that  
suppresses coherent error to $O(\varepsilon^3)$.
\qed \end{example}

\todo{show entire [[9,1,3]] decoder}

\subsection{Magic state distillation}

We repeat the same analysis of the meromorphic decoder, but
now we are interested in \emph{magic states}, which are  
non-stabilizer states prepared by the decoder $L^\dag$.
See \cite{Bravyi2005,Zheng2024}.

\begin{example}
Here we consider 
the original magic state distillation procedure from \cite{Bravyi2005}.
The $[[5,1,3]]$ code has stabilizer group  
$ S = \langle XZZXI,IXZZX,XIXZZ,ZXIXZ\rangle$.
The meromorphic decoder is given by 
$$
    f(z) = \frac{z^5 - 5z}{5z^4 - 1}
$$
with derivative
$$
f'(z) = \frac{5 (z^8 + 14z^4 + 1)}{(5z^{4} - 1)^2}
$$
The degree of $f(z)$ is five, and there are eight weakly distilled points,
all of degree two.
These are the eight $F$-states.
We see $2\times(5-1)=8$ which verifies Thm.~\ref{WD}.
The $r$-polynomial for this code is
$$
r(z) = -4 z (z^{4} + 1) 
$$
and so the meromorphic decoder distills $z=0$
as well as four of the $H$-states. 
Using the octahedral function, we find that $E_7(f(z))-E_7(z)=0$
has solutions at all stabilizer states, $H$-states and $F$-states.
So by Prop.~\ref{E7}, up to a Clifford correction,
this meromorphic decoder can be used to coherently 
distill $F$-states suppressing coherent error up to $O(\varepsilon^2)$.
\qed \end{example}

\begin{example}
The $[[7,1,3]]$ \emph{Steane code} has stabilizer group 
\begin{align*}
S = 
\langle 
&IIIXXXX, IXXIIXX, XIXIXIX, \\
&IIIZZZZ, IZZIIZZ, ZIZIZIZ
\rangle
\end{align*}
which is a CSS code with logical
operators $X^{\tensor 7},Z^{\tensor 7}$.
This has weight enumerator 
$W_{\langle S_X\rangle}(x,y)= x^7 + 7x^3 y^4$ and so by Thm.~\ref{wenum}:
\begin{align*}
f(z) &= \frac{W_{\langle S_X\rangle}(z,1)}{W_{\langle S_X\rangle}(1,z)} 
  = \frac{ z^{7} + 7 z^{3}}{7 z^{4} + 1} \\
r(z) &= (z - 1) z (z + 1) (z^{2} - 2z - 1) (z^{2} + 2z - 1) \\
f'(z) &= \frac{21 (z - 1)^{2} z^{2} (z + 1)^{2} (z^{2} + 1)^{2}}{(7z^{4} + 1)^2}
\end{align*}
The distillation map has degree seven,
and we see it has stationary
points only at the six stabilizer states,
each of degree three.
From the $r$-polynomial, 
it distills the four $H$-states at 
$z^{2} \pm 2z - 1=0$, but these are not stationary points,
and so $f$ does not suppress coherent errors at these points. 
\qed \end{example}

\begin{example}
For the $[[15,1,3]]$ triorthogonal code (\cite{Bravyi2005}\S VI) we find:
\begin{align*}
f(z) &= \frac{ z^{15} +  15 z^{7}}{15 z^{8} + 1 } \\
r(z) &= (z - 1) z (z + 1) (z^{12} + z^{10} + z^{8} - 14z^{6} + z^{4} + z^{2} + 1) \\
f'(z) &= 
\frac{105z^6 (z-1)^2 (z+1)^2 (z^2+1)^2 (z^4+1)^2}{(15z^8 + 1)^2}
\end{align*}
This suppresses coherent error of the
four $H$-states at $z^4+1=0$ to order $O(\varepsilon^3)$,
which are distilled up to Clifford rotation.
\todo{distillation of $z=0$ up to degree $O(\varepsilon^7)$,
refine CSS conjecture. 
There are eight more points distilled up to Clifford
which are $z=\pm 1^{5/16}, \pm 1^{3/16}, \pm(1^{5/48}-1^{13/48})$}.
\qed \end{example}



\section{Departing thoughts and questions}

We are making use of algebraic geometry ideas to reformulate
quantum computing primitives. 
By projectivizing in a functorial way, we begin to
translate the axioms of quantum mechanics into a different, non-linear language.
These are the algebraic bones of quantum computing.
We end up with something 
that is ``the same'' as the usual formulation of quantum physics,
although we would like to see a more precise statement of this. 
Apparently Heisenberg learnt about linear
algebra from Max Born~\cite{Heilbron2023}. 
Perhaps if he had wandered a bit further
down the hallway to find an algebraic geometer instead 
we would now be rewriting quantum mechanics in terms of linear algebra.

Meromorphic functions are found  in many places in the literature,
such as in signal processing and analytic combinatorics.
Stationary points of our meromorphic distillation 
functions hold the key to coherent error suppression in
(magic) state preparation, and these appear quite generically by 
Thm.\ref{WD}.

A more refined
analysis would differentiate the multivariate meromorphic decoders,
in order to see the effect of independent coherent errors on each input qubit.
We have simplified to the univariate case for simplicity of exposition.
We also gloss over consideration of the point $z=\infty$ which involves
a change of coordinates such as $z\mapsto 1/z.$


Physicists, especially those doing quantum physics,
blithely write the symbol $i$ in equations, but in terms of 
algebraic number theory this is already problematic: which $i$ do you mean?
There are two solutions of $z^2+1=0$.
In this work, we are seeing how states
are elements of number fields, 
and have names which are polynomials.
And these names are ambiguous up to a group
called the Galois group.
What is the physical or computational significance of this group?

\todo{What about higher dimensional Grassmanians ?
These are absent from textbook quantum, until we discuss
quantum codes and the Knill-Laflamme conditions, but would seem to
naturally exist in the projective theory from day one, except we
have truncated to the zero and one dimensional subspaces.}

\todo{The 2-sphere does not support a topological group structure
which may be related to 
why we don't get a topological monoid structure from the qubit algebra
structures of $ZXW$-calculus.}

By restricting to the surface of the Bloch-Riemann sphere 
we are considering pure states and coherent operations.
Extending to incoherent states requires considering 
anti-holomorphic structure: functions of $\bar{z}$.

Finally, we mention a closely related idea known
as ``geometric quantization''  \cite{Echeverria1999,Blau1992}.
In this work we are treating the projective space
$\Proj(\Hilb)$ as a kind of configuration space for a system.
Going the other way, we can ask, given a complex manifold $\mathcal{M}$,
what extra information is needed to construct a quantum Hilbert space 
for which $\mathcal{M}$ is the classical configuration space?

{\bf Acknowledgements.} The authors thank
Natalie Brown, Shival Dasu, Matthew Burke
and Aleks Kissinger for helpful discussions.

\newpage
\appendix


\section{Proof of Prop.~\ref{lax}}
\label{applax}

We prove that $\PBot:\CVec_\tensor \to \Set_{\bot,\wedge} $
is lax monoidal 
with unitor given by the identity map (the unary $\Sigma$),
and laxator given by $\Sigma$.

Given $\Set_{\times}$ as a strict
monoidal (cartesian) category, we claim that $\PMSet$ is also
a strict monoidal category. For example,
\begin{align*}
((A\cup\{\bot\})\wedge (B\cup\{\bot\}))\wedge (C\cup\{\bot\}) 
&= ((A\times B)\times C)\cup\{\bot\}  \\
&= (A\times (B\times C))\cup\{\bot\}  \\
&= (A\cup\{\bot\})\wedge ((B\cup\{\bot\})\wedge (C\cup\{\bot\}))
\end{align*}
and similarly for the units.
We will also take $FVec_{\tensor}$ as a strict monoidal
category.

We have,
$$
\Proj_\bot(1_\tensor) = \Proj_\bot(\Complex^1) = \{\star\}\cup\{\bot\},
$$
and so we can take the unitor
as the identity map on $\{\star\}\cup\{\bot\}$.

Given vector spaces $V,W\in\CVec_\tensor$ we define the laxator as
the natural transform 
\begin{align*}
\Sigma_{V,W} : \PBot(V)\wedge\PBot(W) & \to \PBot(V\tensor W) \\
    \bot & \mapsto \bot \\
    ([v],\bot) &\mapsto \bot \\
    (\bot,[w]) &\mapsto \bot \\
    ([v],[w]) &\mapsto [v\tensor w]
\end{align*}
\todo{show this is natural}

For associativity we need to show,
given vector spaces $U,V,W\in\CVec$
the following two maps are equal:
$$
\PBot(U)\wedge\PBot(V)\wedge\PBot(W) 
\xlongrightarrow{\Sigma_{U,V}\wedge 1_W} 
\PBot(U\tensor V)\wedge\PBot(W)
\xlongrightarrow{\Sigma_{U\tensor V,W}}
\PBot(U\tensor V\tensor W)
$$
and
$$
\PBot(U)\wedge\PBot(V)\wedge\PBot(W) 
\xlongrightarrow{1_U\wedge\Sigma_{V,W}} 
\PBot(U)\wedge\PBot(V\tensor W)
\xlongrightarrow{\Sigma_{U,V\tensor W}}
\PBot(U\tensor V\tensor W)
$$
Given $u\in U,v\in V,w\in W$
these maps both take $([u],[v],[w])$ 
to $[u\tensor v\tensor w]$ and any appearance
of $\bot$ results in $\bot$.

For unitality, we check that both 
$\Sigma_{\Complex^1,U}$ and $\Sigma_{U,\Complex^1}$ are identity maps
on $\Proj_\bot(\Complex^1\tensor U) = \Proj_\bot(U).$


\section{Proof of Thm.~\ref{wenum}}
\label{appwenum}


Given a CSS code with parameters $[[n,1,d]]$,
stabilizer generators $S_X, S_Z$, logical operators $L_X,L_Z$,
and logical decoder $L^\dag$, we claim that 
\begin{align}\label{eq}
L^\dag \Matrix{x\\y}^{\tensor n}
&= \Matrix{W_{\langle S_X\rangle}(x,y) \\ W_{L_X\langle S_X\rangle}(x,y) }.
\end{align}
When $L_X=X^{\tensor n}$ we have that 
 $W_{L_X\langle S_X\rangle}(x,y) = 
 W_{\langle S_X\rangle}(y,x)$
and the result follows by projectivizing.

To prove the claim (\ref{eq}), 
we write the logical decoder $L^\dag$
as a phase-free ZX diagram called
the \emph{Z-X normal form}~\cite{Kissinger2022}.
By bending wires this becomes 
a $(n+k)$-qubit state, and the 
result follows from~\cite{Kissinger2022} Eq.(5).

\todo{we should spell all this out if we have time}

This proof straightforwardly generalizes in two ways.
Firstly, to CSS codes $[[n,k,d]]$ with $k>1$.
Secondly, we obtain multivariate meromorphic decoder functions
from multivariate versions of the 
weight enumerators called ``full weight enumerators''.


\section{Sage python example code}
\label{appsage}

The following interactive session shows an 
example of using sage via the python interpreter.

\begin{verbatim}
Python 3.10.12 
Type "help", "copyright", "credits" or "license" for more information.
>>> from sage.all_cmdline import *
>>> R = PolynomialRing(ZZ, "z") # polynomial ring in "z" over the integers
>>> z, = R.gens()
>>> F = z**8 + 14*z**4 + 1 # make a polynomial in the z variable
>>> F
z^8 + 14*z^4 + 1
>>> factor(F) # factorize over the integer polynomial ring
(z^4 - 2*z^3 + 2*z^2 + 2*z + 1) * (z^4 + 2*z^3 + 2*z^2 - 2*z + 1)
>>> F(z=1) # evaluate the polynomial at z=1
16
>>> K = PolynomialRing(CyclotomicField(24), "z") # a bigger polynomial ring
>>> KF = K(F) # promote F to this new ring
>>> factor(KF) # KF now completely splits into linear factors
(z - zeta24^4 - zeta24^2) * (z - zeta24^4 + zeta24^2) * 
(z + zeta24^4 - zeta24^2) * (z + zeta24^4 + zeta24^2) * 
(z - zeta24^6 - zeta24^4 + zeta24^2 + 1) * (z - zeta24^6 + zeta24^4 + zeta24^2 - 1) * 
(z + zeta24^6 - zeta24^4 - zeta24^2 + 1) * (z + zeta24^6 + zeta24^4 - zeta24^2 - 1)
>>> f = (z**5 - 5*z) / (5*z**4 - 1) # a meromorphic function
>>> f.parent() # sage knows how to promote fractions
Fraction Field of Univariate Polynomial Ring in z over Integer Ring
>>> diff(f) # differential with respect to the variable z
(5*z^8 + 70*z^4 + 5)/(25*z^8 - 10*z^4 + 1)
>>>


\end{verbatim}

\section{Galois connection}

The 24'th cyclotomic polynomial is $z^8 - z^4 + 1$.
The Galois group for the 24'th cyclotomic field
$\Rational(1^{1/24})\cong\Rational[z]/\{z^8-z^4+1\}$
is $\GL(1,\Integer_{24}) \cong \Integer_2^3.$
This is a three dimensional vector space over the
field $\Field_2$, and so has sixteen
(additive) abelian subgroups:
$1+7+7+1=16$.

\bibliography{refs}{}
\bibliographystyle{abbrv}

\end{document}